\documentclass[a4paper,UKenglish,cleveref, autoref, thm-restate]{lipics-v2021}
\hideLIPIcs  %uncomment to remove references to LIPIcs series (logo, DOI, ...), e.g.  when preparing a pre-final version to be uploaded to arXiv or another public repository

\usepackage{mathcomp}
\usepackage{ascmac}
\usepackage{comment}
\newtheorem*{definition*}{Definition}

\newcommand{\cG}{3}
\newcommand{\cB}{2}
\newcommand{\cR}{1}
\newcommand{\cW}{0}

\newcommand{\rot}{\mathrm{rot}}
\newcommand{\splay}[1]{\mathsf{splay}(#1)}

\title{\texorpdfstring{A Refined Analysis of the Sequential Access Theorem for Splay Trees}{A Refined Analysis of the Sequential Access Theorem for Splay Trees}} %TODO Please add

\titlerunning{A Refined Analysis of the Sequential Access Theorem for Splay Trees} %TODO optional, please use if title is longer than one line

\author{Naonori Kakimura}{Department of Mathematics, Keio University, Japan}{kakimura@math.keio.ac.jp}{https://orcid. org/0000-0002-3918-3479}{Supported by JSPS KAKENHI Grant Numbers 23K21646, 26K02867, 26K21777 and JST ERATO Grant Number JPMJER2301, Japan.}%TODO %\author{Jane {Open Access}}{Dummy University Computing Laboratory, [optional: Address], Country \and My second affiliation, Country \and \url{http://www. myhomepage.edu} }{johnqpublic@dummyuni. org}{https://orcid. org/0000-0002-1825-0097}{(Optional) author-specific funding acknowledgements}%TODO mandatory, please use full name; only 1 author per \author macro; first two parameters are mandatory, other parameters can be empty.  Please provide at least the name of the affiliation and the country.  The full address is optional.  Use additional curly braces to indicate the correct name splitting when the last name consists of multiple name parts. 

\author{Yoshihiko Terai}{Japan}{}{}{}

\authorrunning{N. Kakimura and Y. Terai} %TODO mandatory.  First: Use abbreviated first/middle names.  Second (only in severe cases): Use first author plus 'et al. '

\Copyright{Naonori Kakimura and Yoshihiko Terai} %TODO mandatory, please use full first names.  LIPIcs license is "CC-BY";  http://creativecommons. org/licenses/by/3. 0/

\ccsdesc[500]{Theory of computation~Data structures design and analysis}
\keywords{Binary Search Trees, Splay Trees, Sequential Access Theorem, Amortized Analysis} %TODO mandatory; please add comma-separated list of keywords

\category{} %optional, e.g. invited paper

\relatedversion{A preliminary version appears in the 37th International Symposium on Algorithms and Computation~(ISAAC 2026).} 
\nolinenumbers %uncomment to disable line numbering

\EventEditors{Lin Chen and Nicole Megow}
\EventNoEds{2}
\EventLongTitle{37th International Symposium on Algorithms and Computation (ISAAC 2026)}
\EventShortTitle{ISAAC 2026}
\EventAcronym{ISAAC}
\EventYear{2026}
\EventDate{December 6--9, 2026}
\EventLocation{Hangzhou, China}
\EventLogo{}
\SeriesVolume{399}
\ArticleNo{23}
\begin{document}
\maketitle

%TODO mandatory: add short abstract of the document
\begin{abstract}
A splay tree is a self-adjusting binary search tree that allows access, insertion, and deletion to be performed in amortized $O(\log n)$ time, where $n$ is the number of stored elements.
The sequential access theorem states that, when the elements of a splay tree are accessed in increasing order, the amortized cost per operation becomes a constant. 
In this paper, we show that the upper bound for this constant is at most $5.5$ by refining the existing analysis and introducing a new potential function.
Furthermore, we complement our result by showing that there exists a splay tree for which the constant is lower-bounded by almost $4$.
\end{abstract}

%\clearpage
\section{Introduction}

A binary search tree is a data structure that maintains a subset of a totally ordered set, allowing to \textsf{access}, \textsf{insert}, and \textsf{delete} an element.
The cost of a binary search tree is usually proportional to the height of the tree, and thus efficient binary search trees, such as
the red-black tree~\cite{RedBlackTree}, the AVL tree~\cite{AVLTree}, and the scapegoat tree~\cite{galperin1993scapegoat},  maintain their height to be $O(\log n)$ where $n$ is the number of elements.

A \textit{splay tree} is a self-adjusting binary search tree, invented by Sleator and Tarjan~\cite{sleator1985self}. 
A splay tree reshapes itself in response to each access. 
After accessing an element $x$, the tree moves $x$ to the root position by repeatedly performing so-called zig--zig, zig--zag, and zig operations, each of which is a sequence of rotations of trees.
Although their height is not guaranteed to be logarithmic in the number of elements, the total cost for processing $m$ requests is shown to be $O((m + n)\log n)$.
Thus, the worst case cost amortized over all the requests is the same as any balanced binary search tree.

    The sequential access theorem for splay trees states that, if we access elements of a splay tree in an ordered way, the total cost is reduced to $O(n)$. 
    In fact, Tarjan~\cite{tarjan1985sequential} proved that the total number of rotations is at most $10.8n$. 
    Elmasry~\cite{elmasry2004sequential} later analyzed\footnote{It is claimed in~\cite{elmasry2004sequential} that the upper bound is at most $4.5n$, but this is the number of ``links'' or zig--zig operations for \textit{semi-splaying}~(see the paragraph just before Theorem~2 in~\cite{elmasry2004sequential}). As each link takes $2$ rotations and we need additional $n$ rotations other than semi-splaying, their result would imply that the number of rotations is bounded by $9.5n$. See the end of Section~\ref{sec:mainproof} for details.} that the bound is at most $9.5n$.
    See also~\cite{Pettie10a,Sundar} for alternative proofs with worse bounds.

    The main contribution of this paper is to improve the upper bound in the sequential access theorem to $5.5n$. 
    To this end, we simplify the analysis of Elmasry~\cite{elmasry2004sequential} providing a more rigorous analysis.
    His analysis~\cite{elmasry2004sequential} introduces a coloring scheme for a splay tree, and defines a potential function based on the coloring. 
    In this paper, we modify their coloring function, and moreover, introduce a potential function so that the cost of splay operations near the root can be amortized.
    In addition, to complement with the main result, we observe in Section~\ref{sec:LB} that there exists a splay tree that takes at least $(4-o(1))n$ rotations in the sequential access.
    
\subsection{Other Related Work}

Splay trees are known to satisfy adaptive bounds, including the working‑set theorem~\cite{sleator1985self}, the static‑optimality theorem~\cite{sleator1985self}, and the dynamic‑finger theorem~\cite{Cole1,Cole2}.
A central open question on splay trees is the dynamic optimality conjecture~\cite{sleator1985self}, which states that splay trees achieve, up to a constant factor, the performance of the best possible dynamic binary search tree, even one that knows the whole access sequence in advance.

The sequential access theorem is a special case of a corollary of the dynamic optimality conjecture, called the traversal conjecture~\cite{sleator1985self}.
The traversal conjecture states that, if we access a splay tree $T$ in the order given by the preorder sequence of another tree $T'$, 
then the total cost is $O(n)$. 
The sequential access theorem is equivalent to the traversal conjecture when $T'$ is a skewed binary tree.
Another special case of the traversal conjecture when $T=T'$ was shown to be true in~Chaudhuri and H\"oft~\cite{ChaudhuriH93}.
Levy and Tarjan~\cite{LevyT19} proved the traversal conjecture when $T$ is $\alpha$-weight balanced.

Another candidate of a binary search tree satisfying dynamic optimality is \textsc{Greedy}~\cite{DemaineHIKP09,Lucas,Munro}.
\textsc{Greedy} is known to satisfy the sequential access theorem~\cite{ChalermsookG0MS15Greedy,Fox11}.
Moreover, the traversal conjecture holds if a linear-cost preprocessing is allowed~\cite{ChalermsookG0MS15}.
See also~\cite{Berendsohn0O24,ChalermsookPY24,Pettie08} and references therein for the analysis of binary search trees with pattern-avoiding inputs.

\section{Splay Trees}

%\subsection{Splay Tree}

Let $T$ be a binary tree with root $r$. 
        For a vertex $x$ of $T$, the \textit{left sub-tree of $x$}, denoted by $T_{\ell}(x)$, is the sub-tree of $T$ such that the root is the left child of $x$. 
        The \textit{right sub-tree of $x$}, denoted by $T_{r}(x)$, is defined similarly. 
For a vertex $x$ of $T$ with $x\neq r$, we denote by $p(x)$ its parent.

Let $U$ be a totally ordered set.
In this paper, we assume for simplicity that $U=\mathbb{Z}$.
        We say that a binary tree $T$ with root $r$ is a \textit{binary search tree} if each vertex $x$ of $T$ has a value $a_x\in U$ satisfying that, for any vertex $y$ of $T_{\ell}(x)$, we have $a_y < a_x$, and, for any vertex $y$ of $T_{r}(x)$, we have $a_x<a_y$.
A binary search tree allows the following operations for a given value $u\in U$:
\begin{enumerate}%[labelwidth=5em]
        \item \textsf{access} $u$. Determine whether $T$ has a vertex with value $u$ or not. 
        \item \textsf{insert} $u$. Add a new vertex with value $u$ to $T$. 
        \item \textsf{delete} $u$. Remove a vertex with value $u$ from $T$. 
    \end{enumerate}
In this paper, we often identify a vertex $x$ with its value $a_x$, and we also say that a vertex $x$ is \textit{accessed} when we invoke \textsf{access} $a_x$.

%%%%%%%%%%%%%%%%%%%%%%%%%%%%%%%%%%%%%%%%
A \textit{rotation} is an operation on a binary search tree that changes the structure without interfering with the order of the elements.
Formally, for an edge $(x,y)$ such that $y$ is the parent of $x$,
a \textsf{rotation at} $(x,y)$ moves the vertex $x$ up in the tree and its parent $y$ down.
%A rotation moves one vertex $x$ up in the tree and its parent $p(x)$ down. 
%Such an operation is denoted by \textsf{rotate} $x$.
%A \textit{right rotation} is a rotation when $x$ is the left child of $y$, and a \textit{left rotation} is one when $x$ is the right child of $y$.
For example, 
when $x$ is the left child of $y$, a \textsf{rotation at} $(x,y)$ moves the right sub-tree of $x$ to the left sub-tree of $y$, and moves the vertex $y$ to the right child of $x$.
%Specifically, in the right rotation at $(x,y)$, the right sub-tree of $x$ moves to the left sub-tree of $y$, and the vertex $y$ becomes the right child of $x$.
%In the left rotation, the left sub-tree of $x$ moves to the right sub-tree of $y$,  and the vertex $y$ becomes the left child of $x$.
See Figure~\ref{fig:zig_step} for an illustration.

\begin{figure}[ht]
                    \centering
                    \includegraphics[scale=0.25]{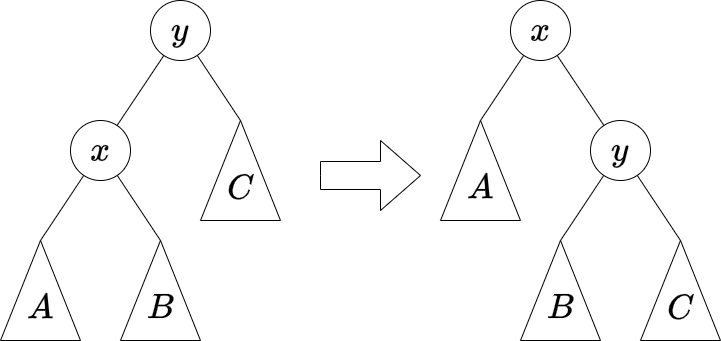}
                    \caption{A rotation, or a zig operation on $(x,y)$.}
                    \label{fig:zig_step}
                \end{figure}

    In a splay tree, each time a vertex $x$ of $T$ is accessed, a \textit{splay}, which is a sequence of rotations defined as below, is performed at $x$.

\begin{definition}
    Let $T$ be a binary search tree, and $x$ be a non-root vertex of $T$. 
    A \emph{splay at $x$} repeats the following steps until $x$ becomes the root of the tree. 
    We denote by $y$ the parent of $x$.
        \begin{itemize}
            \item \textbf{zig operation:} If $y$ is the root, then 
            invoke \textsf{rotation at} $(x,y)$~(Figure~\ref{fig:zig_step}), 
            which we call a \emph{zig operation on $(x,y)$}.
            \item \textbf{zig--zig operation:}
            If $x$ and $y$ are both left children~(or both right children), then first invoke \textsf{rotation at} the edge $(y, p(y))$ and then \textsf{rotation at} the edge $(x, y)$~(Figure~\ref{fig:zig_zig_step}). 
            We call it a \emph{zig--zig operation on} $(x, y, p(y))$. 
            \item \textbf{zig--zag operation:}
            If $x$ is a right~(left, resp.,) child and $y$ is a left~(right, resp.,) child, then first invoke \textsf{rotation at} the edge $(x,y)$, and then \textsf{rotation at} the resulting edge $(x,p(y))$~(Figure~\ref{fig:zig_zag_step}). 
            We call it a \emph{zig--zag operation on} $(x, y, p(y))$.
        \end{itemize}
        Moreover, invoking a splay operation at $x$ is denoted by $\splay{x}$. 
\end{definition}

In this paper, we omit the implementation details of \textsf{access}, \textsf{insert}, and \textsf{delete} for splay trees.
See e.g.,~\cite{sleator1985self} for details.

                           \begin{figure}[ht]
                    \centering
                    \includegraphics[scale=0.25]{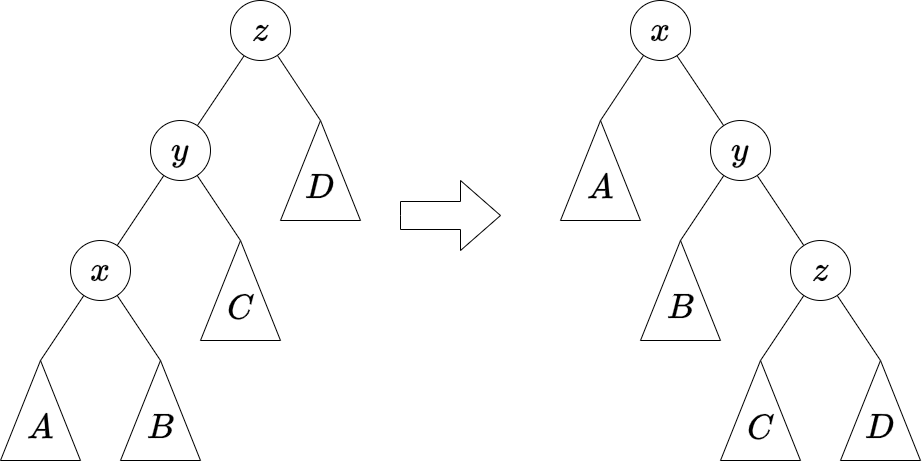}
                    \caption{A zig--zig operation on $(x,y,z)$.}
                    \label{fig:zig_zig_step}
                \end{figure}
                \begin{figure}[ht]
                    \centering
                    \includegraphics[scale=0.25]{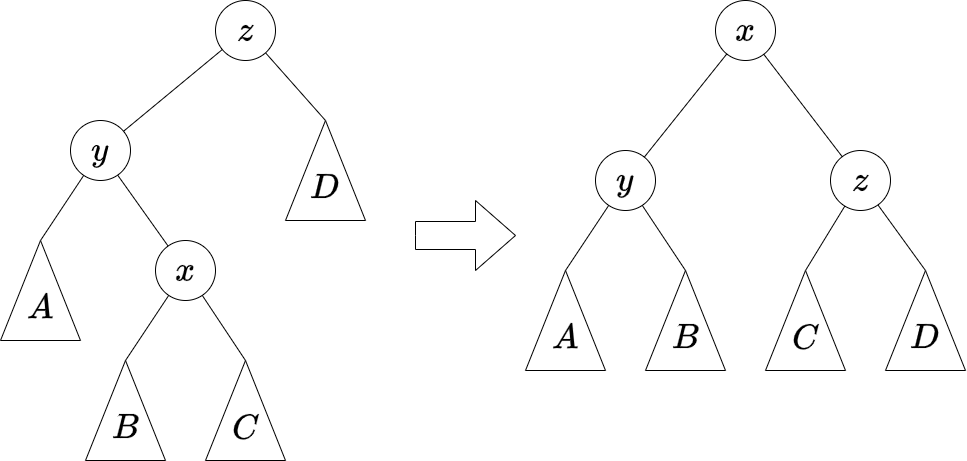}
                    \caption{A zig--zag operation on $(x,y,z)$.}
                    \label{fig:zig_zag_step}
                \end{figure}

\section{Main Theorem}\label{sec:main}

Let $T$ be a splay tree with a set $N\subseteq U$ of $n$ elements. 
In this section, we may simply assume that $N=\{1, 2, \dots, n\}$.
Moreover, we identify each vertex $v$ with its value $a_v$, that is, the vertex set of $T$ is identical to $N$. 

The \textit{sequential access for $T$} is to \textsf{access} the elements in the ascending order of $N$. 
That is, the sequential access invokes $\splay{t}$ for $t=1,2,\dots, n$ in this order.
We denote by $\rot(T)$ the number of rotations during the sequential access for $T$. 
The main theorem of this paper is the following.

\begin{theorem}\label{thm:main}
    For a splay tree $T$ with $n$ vertices, it holds that 
    $\rot(T)\leq 5.5 n$.
\end{theorem}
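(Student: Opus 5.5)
We argue by a potential function together with a coloring of the vertices, in the spirit of Elmasry's proof, with the constants retuned. First we fix the structure of the sequential access. Just before $\splay{t}$, the vertices $1,\dots,t-1$ have already been accessed. After $\splay{t-1}$, vertex $t-1$ is the root, and every smaller vertex lies in its left subtree. Hence $t$ is the leftmost vertex of the right subtree of the root, and the splay path of $t$ consists of one root edge followed by a left path. So every splay at $t\ge 2$ performs a sequence of zig--zig operations along a left spine, and then at most one final zig--zag or zig at the root. Each splay therefore contributes at most $2$ rotations outside the zig--zig steps, and we must bound the total number $Z$ of zig--zig operations. We would aim for $2Z + (\text{top costs}) \le 5.5n$, that is, roughly $Z \le 1.75n$ once the top costs are amortized, which is the charge of about $1.5$ rotations per access.

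Next we define a coloring. Following Elmasry, we color each not-yet-accessed vertex according to its position relative to the current left spine of the root's right subtree. For instance, a vertex is colored $\cG$, $\cB$, $\cR$ or $\cW$ according to whether it lies on the left spine and how its right subtree relates to the access path. The potential $\Phi$ is a weighted sum of the colors, and we add an extra term depending on the shape near the root, such as whether the root's right child has a left child. This extra term pays for the zig--zag or zig at the top. We then verify, operation by operation, that each zig--zig reduces $\Phi$ by at least the amount we charge to it. In a zig--zig on $(x,y,z)$ along the left spine, $y$ and $z$ leave the spine: $z$ becomes the right child of $y$, and $y$ becomes the right child of $x$. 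Their colors drop, and since none of these vertices is ever accessed out of order, their contribution is released only once. The argument is a finite case analysis over the colors of $x,y,z$ and their subtrees.

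The final step is bookkeeping. We bound the initial potential by $c\,n$ and note that the final potential is nonnegative. Each access incurs an amortized cost of at most $d$. Choosing the weights so that $c+d\le 5.5$ gives $\rot(T)\le 5.5n$.

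The main obstacle will be designing the colors and weights so that both of the following hold:
\begin{itemize}
\item every zig--zig case pays for its $2$ rotations with a color drop;
\item the potential increase caused by vertices newly placed on the spine by the top zig or zig--zag (the root's new right child and its subtree) stays bounded by a constant per access.
\end{itemize}
I would first set up a small linear program over the case table, with the color weights as variables, to find weights meeting $c+d\le 5.5$. I would then check that the zig--zig cases in which $y$ already carries a right subtree from earlier splays do not cost extra potential.
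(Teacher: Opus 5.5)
Your proposal has a genuine gap. It never fixes the coloring, the weights or the root term, so everything that would make up the proof is deferred to the case table and the linear program. The one concrete mechanism you do commit to is also false.

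In a zig--zig on $(x,y,z)$, $y$ is the right child of $x$ only until the next zig--zig, which makes $y$ the left child of the next pair's lower vertex. So after $\splay{t+1}$ the lower vertices of all pairs lie on the new splaying spine, above the left spine of the old right sub-tree of $t+1$. Only the upper vertex $z$ leaves the spine, as the right child of $y$. Moreover, every vertex is the bottom of the splaying spine when it is splayed, so a vertex pushed off the spine always returns before it is splayed.

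Consequently one vertex can be involved in many zig--zigs; already on the left path $1,\dots,n$, some vertices are involved in $\Omega(\log n)$ of them. The claim that ``their colors drop and are released only once'' therefore cannot pay for these operations. With a coloring by current position, the upper vertex's color drops when it leaves the spine and rises again when it returns, so nothing is released once and for all. With a history-based coloring like the paper's, a middle zig--zig changes no color at all, since both vertices are red before and after.

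The missing idea is a potential that depends on an unbounded structural quantity rather than on colors alone. The paper lets $g_t(x)$ and $h_t(x)$ count red vertices on the right spine of $x$ and of its left child, and sets $v_t=g_t-h_t$. It charges the following:
\begin{itemize}
\item $h_t(x)^2/2+\alpha(v_t(x))$ to a red vertex, with $\alpha(0)=5$ and $\alpha(v)=\max\{3-v,0\}$ for $v\ge 1$;
\item $\max\{1,2h_t(x)-2\}$ to a non-root top-spine vertex;
\item $5.5$ to an uncolored vertex and $0$ to the root.
\end{itemize}
Take a middle pair with lower vertex $z$ and upper vertex $w$ (your $y$ and $z$). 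Elmasry's lemma and its refinement give $h_{t+1}(w)=h_t(w)-1$ and $h_{t+1}(z)=h_t(z)+1$. Since $h_t(w)=g_t(z)$, the quadratic part drops by exactly $v_t(z)-1$. Monotonicity of $g_t$ along red descendants gives $v_t\ge 0$, and the $\alpha$-part drops by at least $3-v_t(z)$, so each middle pair releases at least $2$.

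The top pair, the bottom pair and the final zig or zig--zag are paid jointly by a case analysis on $s_t$ and $g_t(t+1)$. The payment comes from the drops at $t+1$ (and possibly $t+2$), at the bottom pair, at the lower vertex of the top pair, and at $b_1$. This relies on the $h$-dependent potential on the top spine, and you give no indication how one bounded term about the root's right child could replace it.

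Finally, your split into $2Z$ plus a flat $2$ rotations per access, taken literally, requires $Z\le 1.75n$. That is stronger than anything $\rot(T)\le 5.5n$ yields (only about $Z\le 2.25n$), and the paper never bounds $Z$ separately.
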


The rest of this section is devoted to proving Theorem~\ref{thm:main}.
We remark that, in Section~\ref{sec:LB}, we observe that there exists a splay tree $T$ such that $\rot(T)\geq (4-o(1))n$.

For a splay tree $T$, the \textit{left spine of a sub-tree} is defined to be the path from the root of this sub-tree to its leftmost leaf.  
Thus, every vertex on the path is the left child of its predecessor.  
The \textit{right spine} is defined analogously.
In particular, the right spine from the root is called the \textit{top spine}. 
Also, we call the left spine of the right sub-tree of the root the \textit{splaying spine}. 
See Figure~\ref{fig:coloring} for an example.

\subsection{Preliminaries}\label{sec:pre}

\paragraph*{Initialization}
To bound $\rot(T)$, we first slightly modify the given splay tree $T$ as follows:
We add a vertex with value $0$ to $T$ such that the vertex $0$ is the root and $T$ is the right sub-tree of the vertex $0$. 
Let $T_0$ be the resulting tree. 
Thus, $T_0$ is a splay tree with elements $0, 1, 2, \dots, n$.

Consider the sequential access for $T_0$, instead of $T$.
For each $t=1, 2,\dots, n$,
we denote by $T_t$ the splay tree just after invoking $\splay{t}$ to $T_{t-1}$.
We often call $T_t$ a \textit{splay tree at time $t$}.
Note that the leftmost leaf of $T_t$ is always the vertex $0$ for $t=1,2,\dots, n$.
It then follows that $T_1$, with the removal of the vertex $0$, is identical to the tree after invoking $\splay{1}$ to $T$.
Moreover, for any $t=1,2,\dots ,n-1$, $T_{t+1}$, with the removal of the vertex $0$, is identical to the tree after invoking $\splay{1}, \splay{2},\dots, \splay{t+1}$ to $T$ in this order.
Since the vertex $0$ is involved only once when splaying the vertex $1$, we have $\rot(T_0) = \rot(T) + 1$. 
Thus, in what follows, we aim to evaluate $\rot(T_0)$.

\paragraph*{Splaying at time $t$}

Let $T_t$ be a splay tree at time $t<n$.
We observe that the root of $T_t$ is $t$, and the vertex $t+1$ is the deepest vertex of the splaying spine, which we call the \textit{splaying vertex}, as the vertex $t+1$ will be splayed at time $t$.

Let $S_t$ be the set of vertices on the splaying spine of $T_t$, and let $s_t=|S_t|$. We denote by $y_1, y_2, \dots, y_{s_t}$ the vertices of $S_t$ in this order from the leaf, that is, $y_{i+1}=p_t(y_i)$ for $i=1,\dots, s_t-1$, where $p_t(x)$ denotes the parent of a vertex $x$ in $T_t$.
Then we have $y_{1} = t+1$ and $y_{s_t}$ is on the top spine.

When invoking $\splay{t+1}$ to $T_t$, we first perform a zig--zig operation on $(y_1, y_2, y_3)$~(if $s_t\geq 3$).
After that, we perform a zig--zig operation on $(y_1, y_4, y_5)$.
We repeat such zig--zig operations on $(y_1, y_\ell, y_{\ell+1})$~($\ell=2,4,6,\dots$) for the splaying spine until $y_1$ moves to the top spine.
If $s_t$ is odd, the last zig--zig operation is performed on $(y_1, y_{s_t-1}, y_{s_t})$, and we finally perform a zig operation on $(y_1, t)$ to move $y_1$ to the root.
If $s_t$ is even, then the last zig--zig operation is performed on $(y_1, y_{s_t-2}, y_{s_t-1})$, and finally we perform a zig--zag operation on $(y_1, y_{s_t}, t)$ to move $y_1$ to the root.
Therefore, since each of zig--zig and zig--zag operations takes $2$ rotations, the number of rotations for $\splay{t+1}$ is equal to $s_t$.

We say that a tuple $(z, w)$ of two vertices of $S_t$ is a \textit{pair at time $t$} if $w$ is the parent of $z$ in $T_t$ and $z$ becomes the parent of $w$ in $T_{t+1}$.
That is, a pair $(z, w)$ is two consecutive vertices on the splaying spine such that a zig--zig operation on $(y_1, z, w)$ is performed when splaying the vertex $t+1$.
We denote by $P_t$ the set of pairs at time $t$.

\subsection{Main Proof}\label{sec:mainproof}

To evaluate the number of rotations in the sequential access, we define a potential function based on coloring vertices.
We first introduce a coloring on vertices as follows.
See Figure~\ref{fig:coloring} for an illustration.

\begin{definition}\label{def:coloring}
At time $t$, we define a coloring function $c_t:N\to \{\cW, \cR, \cB, \cG\}$ for each vertex as follows. 
    \begin{enumerate}
        \item If $x$ is the root of $T_t$, then $c_t(x) = \cG$~\textup
        {(\textsf{green})}. 
        \item If $x$ is on the top spine of $T_t$ but not the root of $T_t$, then $c_t(x) = \cB$~\textup
        {(\textsf{black})}. 
        \item If $x$ is on the splaying spine of $T_t$, but not on the top spine, then $c_t(x) = \cR$~\textup
        {(\textsf{red})}. 
        \item Otherwise, $c_t(x) = \cW$~\textup
        {(\textsf{uncolored})} if $t=0$, and $c_t(x) = c_{t-1}(x)$ if $t\geq 1$. 
    \end{enumerate}
    A vertex $x$ with $c_t(x)\neq 0$ is called a \textup{colored} vertex.
\end{definition}

\begin{figure}[ht]
    \centering
    \includegraphics[scale=0.35]{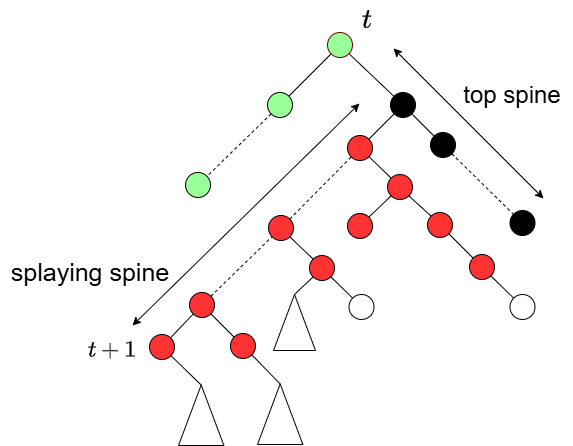}
    \caption{An illustration of coloring of $T_t$.}
    \label{fig:coloring}
\end{figure}

We remark that our coloring function is a simplified version of that of Elmasry~\cite{elmasry2004sequential}.
In~\cite{elmasry2004sequential}, the vertices with color $\cR$ are partitioned into two classes with distinct colors, and the left spine of the root has no colors, where the rotations related to the root are not counted in the analysis of~\cite{elmasry2004sequential}.

It is observed from Definition~\ref{def:coloring} that a vertex $x$ becomes colored once $x$ is on the splaying spine, and $x$ remains colored until the end.
We also see that the colored vertices induce a connected tree.

For a vertex $x$ of $T_t$, we define $g_t(x)$ by the number of vertices $y$ with $c_t(y)=1$ on the right spine of $x$, and $h_t(x)$ by the number of vertices $y$ with $c_t(y)=1$ on the right spine of the left child of $x$. 
We also define $v_t(x) = g_t(x) - h_t(x)$. 

The following lemma is observed in~\cite{elmasry2004sequential}.

\begin{lemma}[Elmasry~\cite{elmasry2004sequential}]\label{lem:delta_not_t}
    For any pair $(z, w)\in P_t$ at time $t < n$, the following hold: 
    \begin{enumerate}
        \item $g_{t+1}(w) = g_t(w)$.
        \item $h_{t+1}(w) = h_t(w) - 1$.
        \item $g_{t+1}(z) = g_t(w) + 1$.
        \item $h_{t+1}(z) \leq h_t(z) + 1$.
    \end{enumerate}
\end{lemma}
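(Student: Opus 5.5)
The plan is to prove all four items by directly tracking the local restructuring done by the zig--zig operation on $(y_1, z, w)$, together with the effect of splaying on colors. Fix a pair $(z,w)\in P_t$. Then $z$ is the left child of $w$ on the splaying spine of $T_t$, and just before this zig--zig the vertex $y_1$ is the left child of $z$ (every earlier zig--zig has lifted $y_1$ into the position directly below $z$). The first rotation, at $(z,w)$, makes $w$ the right child of $z$ and moves the old right subtree $T_r(z)$ to become the left subtree of $w$. The second rotation, at $(y_1,z)$, makes $z$ the right child of $y_1$ and moves the current right subtree of $y_1$ to become the left subtree of $z$. The right subtree of $w$ is never touched. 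Later operations of $\splay{t+1}$ only rotate edges above $z$, so the subtree rooted at $z$ is the same in $T_{t+1}$ as just after this zig--zig.

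Given this structure, the four items reduce to counting red vertices on right spines. For (1), the right spine of $w$ in $T_{t+1}$ is the same vertex sequence as in $T_t$. For (2), the left child of $w$ in $T_t$ is $z$, while in $T_{t+1}$ it is the old right child of $z$. So the relevant right spine loses exactly the vertex $z$, which is red at time $t$ since it lies on the splaying spine. For (3), the right spine of $z$ in $T_{t+1}$ is $z$ followed by the right spine of $w$, which gives $g_{t+1}(w)+1 = g_t(w)+1$ provided $c_{t+1}(z)=\cR$. For (4), the left child of $z$ in $T_{t+1}$ is the vertex that was the right child of $y_1$ just before this step. That is either the original right child of $y_1$ (for the first pair) or the $w'$ of the previous pair $(z',w')$. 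I would bound its right spine by relating it to the right spine of $z$'s left child in $T_t$, which is $y_1$'s ancestor chain down to $z'$, and accounting for at most one extra red vertex.

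All of these counts need one color-bookkeeping lemma. Using Definition~\ref{def:coloring}, every vertex that is off the top spine and off the splaying spine in both $T_t$ and $T_{t+1}$ keeps its color. Vertices of $S_t$ stay red unless they move onto the top spine of $T_{t+1}$. The only vertices whose color changes are those entering the new top spine (which becomes $y_1$ together with its right-spine descendants among the last pair) and those entering the new splaying spine of $T_{t+1}$, which turn red.

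I expect the main obstacle to be item (4) together with this bookkeeping. The difficulty is showing that no vertex on the relevant right spines changes color in a way that breaks the counts. In particular, I must check the following. First, vertices newly placed on the splaying spine of $T_{t+1}$ lie on left spines, so they do not spuriously raise a right-spine count beyond the single ``$+1$''. Second, $z$ is still red (not black) in $T_{t+1}$. Third, in the boundary cases (the last pair when $s_t$ is odd or even) the vertices $z$ and $w$ may end up on the top spine; there I would verify that $g$ and $h$ count color-$1$ vertices consistently, splitting into the odd-$s_t$ and even-$s_t$ cases and checking the top segment explicitly.
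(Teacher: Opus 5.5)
\textbf{Gap: item (3) fails for the top pair.} Your local picture of the zig--zig is correct, as is your observation that the subtree rooted at $z$ is frozen after its zig--zig. Your arguments for items (1) and (2) also go through for every pair, including the top one. The genuine gap is item (3) for the top pair. There your checklist claim ``$z$ is still red (not black) in $T_{t+1}$'' is false, so the boundary-case verification cannot be carried out.

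If $s_t\ge 3$ is odd, the last zig--zig is on $(y_1,c,b_1)$, and the final zig makes $c$ the right child of the new root $t+1$. So $c$ lies on the top spine of $T_{t+1}$ and $c_{t+1}(c)=2$. Its right spine in $T_{t+1}$ is $c,b_1,\dots,b_\ell$, all of color $2$, so $g_{t+1}(c)=0$. But $g_t(b_1)=0$, so the lemma demands $g_{t+1}(c)=1$. A concrete instance: let $T_0$ have root $0$ with right child $3$, let $2$ be the left child of $3$, and let $1$ be the left child of $2$. The top pair is $(2,3)$, $T_1$ has top spine $1,2,3$, and $g_1(2)=0\neq 1=g_0(3)+1$.

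So item (3) as stated is false. What you can and should prove is (3) for all pairs other than the top pair; items (1), (2), (4) do hold for the top pair. This restricted form is all the paper uses:
\begin{itemize}
\item in Lemma~\ref{lem:nonneg_v}, the first vertex of each pair invoked is red at the later time;
\item in Lemma~\ref{lem:refined_h}, the pair's second vertex is a left child;
\item Corollary~\ref{cor:Elmasry_v}(2) and Lemma~\ref{lem:red_q} are only applied to middle and bottom pairs;
\item the top pair is handled directly in Lemma~\ref{lem:black_rotation_odd} via the black potential.
\end{itemize}
The paper itself gives no proof here, only the citation to Elmasry, whose coloring is different.

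\textbf{Item (4): the new left child of $z$ is misidentified.} Just before the zig--zig on $(y_1,z,w)$, the right child of $y_1$ is $z'$, not $w'$. The second rotation of the previous zig--zig on $(y_1,z',w')$ made $z'$ the right child of $y_1$, with $w'$ as the right child of $z'$. Hence in $T_{t+1}$ the left child of $z$ is $z'$. Its right spine is $z'$ followed by the right spine of $w'$, and $w'$ is exactly the left child of $z$ in $T_t$. Since $(z',w')$ is never the top pair, item (1) and the restricted item (3) give $h_{t+1}(z)=g_{t+1}(z')=g_t(w')+1=h_t(z)+1$, with equality.

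For the lowest pair $(y_2,y_3)$, the new left child of $y_2$ is the old right child $y'$ of $t+1$; if there is none, $h_{t+1}(y_2)=0$. Otherwise $y'$ now lies on the new splaying spine and is red. The part of its right spine strictly below it keeps its colors; say it contains $R$ red vertices. Also $t+1$ was red at time $t$, since $s_t\ge 3$. So $h_{t+1}(y_2)=1+R$, while $h_t(y_2)=g_t(t+1)=1+\epsilon+R$ with $\epsilon\in\{0,1\}$ recording whether $y'$ was already red. Thus $h_{t+1}(y_2)\le h_t(y_2)$. This computation should replace your description via ``$y_1$'s ancestor chain down to $z'$'', which conflates a left spine with a right spine.
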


This implies the following corollary.

\begin{corollary}\label{cor:Elmasry_v}
    For any pair $(z, w)\in P_t$ at time $t < n$, the following hold: 
\begin{enumerate}%\label{eq:v_diff}
\item $v_{t+1}(w)  =  v_t(w)+1$.
\item $v_{t+1}(z)  \geq v_t(w)+v_t(z)$.
\end{enumerate}
\end{corollary}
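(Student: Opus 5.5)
This follows by subtracting the identities of Lemma~\ref{lem:delta_not_t} coordinate-wise, using only the definition $v_t(x)=g_t(x)-h_t(x)$.

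For the first item, apply items~1 and~2 of Lemma~\ref{lem:delta_not_t} to $w$:
\[
v_{t+1}(w)=g_{t+1}(w)-h_{t+1}(w)=g_t(w)-\bigl(h_t(w)-1\bigr)=v_t(w)+1 .
\]

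For the second item, start from items~3 and~4 of Lemma~\ref{lem:delta_not_t}:
\[
v_{t+1}(z)=g_{t+1}(z)-h_{t+1}(z)\ \geq\ g_t(w)+1-h_t(z)-1=g_t(w)-h_t(z).
\]
It remains to show that $g_t(w)-h_t(z)\ge v_t(w)+v_t(z)$. Writing out the right-hand side, this is
\[
g_t(w)-h_t(z)\ \ge\ g_t(w)-h_t(w)+g_t(z)-h_t(z),
\]
which is equivalent to $h_t(w)\ge g_t(z)$.

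The only non-mechanical step is this identity $h_t(w)=g_t(z)$, which gives the needed inequality with equality. In $T_t$, the pair $(z,w)$ consists of consecutive vertices on the splaying spine, with $w=p_t(z)$. Every vertex on that spine is the left child of its predecessor, so $z$ is the left child of $w$. Therefore the right spine of the left child of $w$ is exactly the right spine of $z$. Both $h_t(w)$ and $g_t(z)$ count the vertices of colour $\cR$ on this same path, so they are equal.

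With $h_t(w)=g_t(z)$, the chain above gives $v_{t+1}(z)\ge v_t(w)+v_t(z)$, which proves the second item.
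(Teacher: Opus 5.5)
Your proposal is correct and matches the paper's proof: both subtract the identities of Lemma~\ref{lem:delta_not_t} and use $h_t(w)=g_t(z)$. You also spell out why that identity holds ($z$ is the left child of $w$ on the splaying spine), which the paper only states in passing.
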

\begin{proof}%[Proof of Corollary~\ref{cor:Elmasry_v}]
By Lemma~\ref{lem:delta_not_t}, we obtain
\begin{align*}
v_{t+1}(w) & = g_{t+1}(w) - h_{t+1}(w) = g_t(w) - \left(h_t(w) - 1\right) = v_t(w)+1,\\
v_{t+1}(z) & = g_{t+1}(z) - h_{t+1}(z) \geq \left(g_t(w) +1 \right)- \left(h_t(z) + 1\right) = v_t(w)+v_t(z),
\end{align*}
where the last equality of the second one follows as $h_t(w)=g_t(z)$.
\end{proof}

We summarize basic properties of the functions $g_t$, $h_t$, and $v_t$.
%See the full version~\cite{} for the proofs.

\begin{lemma}\label{lem:nonneg_v}
    Let $z, w$ be any two vertices of $T_t$ such that $c_t(z)=c_t(w)=1$ and 
    $z$ is a descendant of $w$.
    Then we have $g_t(z)\leq g_t(w)$. 
\end{lemma}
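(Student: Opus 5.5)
The plan is to prove Lemma~\ref{lem:nonneg_v} by a structural invariant about where red vertices can sit relative to right spines, and then to reduce the comparison of $g_t(z)$ and $g_t(w)$ to a walk along the path from $w$ down to $z$.

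\textbf{Key observation.} Consider the tree path from $w$ down to $z$. I would show that along this path every edge is a left edge, so that $z$ lies on the left spine of $w$. Equivalently, no red vertex is the right child of another red vertex, or more generally a right descendant through a red ancestor. If this holds, then the right spine of $z$ is disjoint from the right spine of $w$. In that case a direct injection is not available, and a cleaner route is induction over time.

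\textbf{Inductive proof.} The invariant to prove for all $t$ is: for red $z,w$ with $z$ a descendant of $w$, we have $g_t(z)\le g_t(w)$.
\begin{itemize}
\item At $t=0$ there are no red vertices off the top spine except on the splaying spine of $T_0$. The splaying spine is a left spine, so the right spines of its vertices are uncolored except for the vertex itself. Hence every such $g$ equals $1$ and the claim holds trivially.
\item For $t\to t+1$, only vertices touched by the splay change their right spines. These are the pairs $(z',w')\in P_t$ and $y_1$.
\end{itemize}
By Lemma~\ref{lem:delta_not_t}, $g_{t+1}(w')=g_t(w')$ and $g_{t+1}(z')=g_t(w')+1$. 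After the zig--zig, $z'$ becomes the parent of $w'$, and $w'$ ends up on the right spine of $z'$. The subtrees hanging off the left of these vertices are unchanged. For red descendants $u$ of $z'$ or $w'$ that were not on the splaying spine, their $g$ is unchanged. Their old ancestor relation to $w'$ gives $g_{t}(u)\le g_t(w')\le g_{t+1}(z'), g_{t+1}(w')$.

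\textbf{Case split.} I would distinguish whether the descendant $u$ sits in the former left subtree of $w'$ that becomes attached under $z'$, or below $z'$ originally. In each case, chain the old inequality with the update formulas. For ancestors of $z'$ among the new structure, namely the red vertices formerly higher on the splaying spine that are now the parents of the pair, note that $g$ of a pair's top vertex $z'$ is at most $g$ of the next pair's bottom vertex plus bookkeeping. Here I would use that the pairs are processed upward, so $g_t(w')$ increases along the spine. Vertices that become black or green drop out of the hypothesis, since both vertices must have color $1$.

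\textbf{Main obstacle.} The hard step is the case where $z$ and $w$ are in different former pairs after the rearrangement. In that case the new ancestor relation did not exist before, and one must compare $g_t(w_i)+1$ with $g_t(w_j)$ for $i<j$ on the old splaying spine. I would try to use that $g_t(w_j)\ge g_t(w_i)+1$ whenever $w_j$ is strictly above $z_i$ on the splaying spine. This holds because the right spine of $w_j$ is then unaffected while $g_t$ along the old spine is nondecreasing upward by the induction hypothesis. The strict gap then needs extra care.
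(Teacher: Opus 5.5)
Your proposal has a genuine gap. The inductive step is where all the work lies, and it rests on a wrong picture of the tree after the splay.

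\textbf{False structural claims.} The ``key observation'' that no red vertex is the right child of a red vertex is false. After a zig--zig on a middle pair $(z',w')$, the vertex $w'$ is the right child of $z'$ and both are still red; this is exactly what Lemma~\ref{lem:delta_not_t}(3) records. The left subtrees of $z'$ and $w'$ are not unchanged either. The new left subtree of $w'$ is the old right subtree of $z'$. The new left child of $z'$ is the $z$-vertex of the next lower pair, or the old right child of $t+1$.

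\textbf{The ``main obstacle'' inverts the ancestor relation.} Take pairs $(z_i,w_i)$ below $(z_j,w_j)$. After the splay, $w_j$ is the right child of $z_j$, while $z_i$ lies in the left subtree of $z_j$. So the splay destroys, rather than creates, the relation ``$z_i$ is a descendant of $w_j$'', and you never need to compare $g_t(w_i)+1$ with $g_t(w_j)$. This matters because the strict gap $g_t(w_j)\ge g_t(w_i)+1$ you propose to establish is false in general. At $t=0$ every red vertex of the splaying spine has $g_0=1$, so the plan as written would stall on an unprovable inequality. The cross-pair comparison that actually arises is $g_{t+1}(z_i)\le g_{t+1}(z_j)$. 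That is just $g_t(w_i)\le g_t(w_j)$, which is the induction hypothesis.

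\textbf{Incomplete case coverage.} Some red vertices newly join the splaying spine from the old right subtree of $t+1$. When $s_t\le 2$ there are no pairs at all, and these are the only red spine vertices. Their $g$ may jump from $0$ to $1$. Their comparisons with one another, and with the lowest pair vertex above them, are not handled, and ``plus bookkeeping'' does not cover them.

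\textbf{The missing idea.} The paper reduces to the case where $z$ is a child of $w$. This is legitimate because every vertex strictly between two red vertices is itself red: it is colored, it lies in the right subtree of the root off the top spine, and all such vertices are red. The general case then follows by chaining along the path. Index the step as $t-1\to t$ and split as follows.
\begin{itemize}
\item If $z$ is the right child of $w$, then $g_t(w)=g_t(z)+1$ by definition.
\item If $z$ is the left child and $w,z\notin S_t$, then $z$ was already a descendant of $w$ at time $t-1$, and both $g$-values are unchanged.
\item If $z$ is the left child and $w,z\in S_t$, there are three possible subcases according to membership in $S_{t-1}$. Each closes in one line from Lemma~\ref{lem:delta_not_t}(3), the induction hypothesis, and the identity $g_t(x)=\max\{1,g_{t-1}(x)\}$ for spine vertices coming from the right subtree of $t$ in $T_{t-1}$.
\end{itemize}
This purely local check replaces your global bookkeeping over all pairs and only ever needs non-strict inequalities.
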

\begin{proof}
    We will show the lemma by induction on time $t$.
    When $t=0$, any vertex with color $1$ belongs to the splaying spine $S_0$ and $g_0(x) = 1$ for any $x\in S_0$ such that $x$ is not on the top spine.
    Hence the lemma holds. 

    Let $t\geq 1$, and assume that $g_{t-1}(z)\leq g_{t-1}(w)$ holds for any two vertices $z, w$ such that $z$ is a descendant of $w$ and $c_{t-1}(z)=c_{t-1}(w)=\cR$.
    It suffices to show the case when $z$ is a child of $w$, as applying this case repeatedly would imply the case when $z$ is a descendant of $w$.
    If $z$ is the right child of $w$, it holds by the definition of $g_t$ that $g_{t}(w)= g_{t}(z)+1$, and thus the lemma holds.
    Therefore, in what follows, we consider the case when $z$ is the left child of $w$.
    Then, it follows that either $w, z\not\in S_t$ or $w, z\in S_t$, where we recall that $S_t$ is the set of the vertices of the splaying spine of $T_t$.
    
    If $w, z\not\in S_t$, then we observe that $z$ was a descendant of $w$ at time $t-1$, and moreover, we have $g_{t}(w)=g_{t-1}(w)$ and $g_{t}(z)=g_{t-1}(z)$.
    Thus, the lemma holds by induction.
    
    Next consider the case when $w, z\in S_t$. 
    Then they satisfy one of the following three conditions: (a) $w, z\in S_{t-1}$, (b) $w\in S_{t-1}$, $z\not\in S_{t-1}$, or (c) $w, z\not\in S_t$.

    \begin{description}
        \item[(a) $z, w\in S_{t-1}$.]
         This means that, at time $t-1$, there exist two vertices $y, y'$ such that $(z,y)$ and $(w,y')$ are pairs and $w=p_{t-1}(y)$.
         Then we see by Lemma~\ref{lem:delta_not_t} that $g_t(z) = g_{t-1}(y)+1$ and $g_t(w) = g_{t-1}(y')+1$.
         Since $y$ is a descendant of $y'$ in $T_{t-1}$, it holds that $g_{t-1}(y) \leq g_{t-1}(y')$ by the induction hypothesis.
         Therefore, we have $g_t(z)\leq g_t(w)$. 
        \item[(b) $z\not\in S_{t-1}$, $w\in S_{t-1}$.]
        Then, at time $t-1$, $z$ is on the left spine of the right sub-tree of $t$ in $T_{t-1}$.
        It follows that, if $z$ is colored at time $t-1$, $g_t(z) = g_{t-1}(z)$, and otherwise, we have $g_{t-1}(z)=0$ and $g_t(z)=1$ as $z\in S_t$.
        Thus, $g_t(z) = \max\{1, g_{t-1}(z)\}\leq g_{t-1}(z)+1$.
        On the other hand, 
        since $w\in S_{t-1}$, there exists a vertex $y'$ such that $(w,y')$ is a pair at time $t-1$.
        By Lemma~\ref{lem:delta_not_t}, $g_t(w) = g_{t-1}(y')+1$.
        Since $g_{t-1}(w) \leq g_{t-1}(y')$ by the induction hypothesis, we obtain $g_t(w) \geq g_{t-1}(w)+1$.
        We observe that $g_{t-1}(w) \geq g_{t-1}(z)$, since the inequality clearly holds if $g_{t-1}(z)=0$ and by induction if $g_{t-1}(z)\geq 1$. 
        Hence, we obtain $g_t(w) \geq g_{t-1}(z)+1\geq g_t(z)$. 
        \item[(c) $z, w\not\in S_{t-1}$.]%\leavevmode\par
       Then, at time $t-1$, $z, w$ are both on the left spine of the right sub-tree of $t$ in $T_{t-1}$.
       Similarly to the previous case, it follows that $g_t(z) = \max\{1, g_{t-1}(z)\}$ and $g_t(w) = \max\{1, g_{t-1}(w)\}$.
       Then $g_t(z)\leq g_t(w)$ clearly holds if $g_{t-1}(z)=0$.
       Otherwise, that is, if $z$ is colored at time $t-1$, then $w$ is also colored at time $t-1$, and hence we have $g_{t-1}(z) \leq g_{t-1}(w)$ by induction.
       Hence, we obtain $g_t(z)\leq g_t(w)$.
        \end{description}
        Therefore, in each case, we have shown that $g_t(z) \leq g_t(w)$.
\end{proof}

\begin{corollary}\label{cor:nonnegative_v}
    For any vertex $x$ with $c_t(x)=1$ at time $t$, it holds that $v_t(x)\geq 0$.
\end{corollary}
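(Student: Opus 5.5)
We derive Corollary~\ref{cor:nonnegative_v} directly from Lemma~\ref{lem:nonneg_v}, applied to $x$ and the left child of $x$. Recall that $v_t(x)=g_t(x)-h_t(x)$. Here $h_t(x)$ counts the vertices of color $\cR$ on the right spine of the left child $u$ of $x$, so $h_t(x)=g_t(u)$ if $u$ exists, and $h_t(x)=0$ otherwise. In the latter case $v_t(x)=g_t(x)\geq 0$ trivially. So assume $u$ exists.

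The main step is to compare $g_t(u)$ with $g_t(x)$ even though $u$ itself need not have color $\cR$. We split into two cases.

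\textbf{Case 1: the right spine of $u$ contains no vertex of color $\cR$.} Then $h_t(x)=0$ and $v_t(x)=g_t(x)\geq 0$.

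\textbf{Case 2: the right spine of $u$ contains a vertex of color $\cR$.} Let $u'$ be the topmost such vertex, that is, the one closest to $u$. By the definition of $g_t$, every vertex of color $\cR$ on the right spine of $u$ lies on the right spine of $u'$, so $g_t(u)=g_t(u')$. Moreover $u'$ is a descendant of $x$, and $c_t(u')=c_t(x)=\cR$. Lemma~\ref{lem:nonneg_v} therefore gives $g_t(u')\leq g_t(x)$. Hence
\begin{align*}
h_t(x)=g_t(u)=g_t(u')\leq g_t(x),
\end{align*}
so $v_t(x)\geq 0$.

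The only point needing care is the passage from $u$ to $u'$. Lemma~\ref{lem:nonneg_v} requires both vertices to have color $\cR$, which is why we use $u'$ rather than $u$. Counting $\cR$-colored vertices along the right spine of $u$ gives the same number as along the right spine of $u'$, since no $\cR$-colored vertex lies above $u'$ on that spine. No induction on $t$ is needed; all of that work is contained in Lemma~\ref{lem:nonneg_v}.
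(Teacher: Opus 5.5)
Your proof is correct and follows the paper's argument: both write $h_t(x)=g_t(u)$ for the left child $u$ of $x$ and then compare with $g_t(x)$ via Lemma~\ref{lem:nonneg_v}. The only difference is that the paper applies the lemma directly to $u$ without checking $c_t(u)=1$. That step tacitly relies on the colored vertices forming a connected subtree, so that an uncolored $u$ forces $g_t(u)=0$. Your passage to the topmost color-$1$ vertex $u'$ on the right spine of $u$ handles this point explicitly.
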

\begin{proof}%[Proof of Corollary~\ref{cor:nonnegative_v}]
If $x$ has no left child, then $h_t(x) = 0$, and hence the claim holds as $g_t(z)\geq 0$.
Otherwise, letting $w$ be the left child of $x$, we have $v_t(x) = g_t(x) - h_t(x) = g_t(x) - g_t(w) \geq 0$ by Lemma~\ref{lem:nonneg_v}.
Thus, the corollary holds.
%\qed
\end{proof}

%\subsection{Main Proof}

We next define a function, called a \textit{potential}, using the coloring function $c_t$.

\begin{definition}\label{def:potential}
For any time $t\leq n$, we define a \emph{potential} $\Phi_t: N\to \mathbb{R}_{\geq 0}$ as follows. 
    \begin{itemize}
        \item If $c_t(x) = \cW$, then $\Phi_t(x) = 5. 5$. 
        \item If $c_t(x) = \cR$, then \[
        \Phi_t(x) = \frac{h_t(x)^2}{2} + \alpha(v_t(x)),
        \quad \text{where} \quad
    \alpha(n) = 
        \begin{cases}
            5  & \textup{(if $n = 0$)} \\
            \max\{3-n, 0\}  & \textup{(if $n \geq 1$)}
        \end{cases}.        
        \]
      \item If $c_t(x) = \cB$, then  
     \begin{equation}
          \Phi_t(x)=
          \begin{cases}
           1                       & \textup{(if $0\leq h_t(x) \leq 1$)} \\
            2h_t(x) - 2 & \textup{(if $h_t(x)\geq 2$)}
          \end{cases}. 
        \end{equation}
        \item If $c_t(x) = \cG$, then $\Phi_t(x) = 0$. 
    \end{itemize}
    Moreover, a \emph{potential} $\Phi(T_t)$ of a splay tree $T_t$ at time $t$ is defined by 
    \[
    \Phi(T_t) = \sum_{x\in N}\Phi_t(x). 
    \]
\end{definition}

The main part of the proof of Theorem~\ref{thm:main} is to show the following.
Let $r_t$ be the number of rotations when invoking $\splay{t}$ to $T_{t-1}$ for $t=1,2,\dots, n$.
Then we have $\rot (T_0)=\sum_{t=1}^n r_t$.

\begin{lemma}\label{lem:main}
    For any time $t=0, 1, 2, \dots, n-1$, it holds that
    \begin{equation*}\label{main_ineq}
        r_{t+1} \leq \Phi(T_t) - \Phi(T_{t+1}).
    \end{equation*}
\end{lemma}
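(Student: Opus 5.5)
We will prove the inequality by tracking exactly which vertices change their potential when $\splay{t+1}$ is applied to $T_t$, and charging the $s_t=r_{t+1}$ rotations to these changes. Vertices not on the splaying spine, not on the top spine, and not on the left spine of the right subtree of $y_1=t+1$ keep their color, their $g$ and $h$ values, and hence their potential. So the potential change splits into four contributions. The first comes from the new root $t+1$, whose color goes from $\cR$ (or $\cB$ if $s_t=1$) to $\cG$, with potential dropping to $0$. The second comes from the old root $t$, which becomes black with some small $h$. The third comes from the old top-spine vertices, which remain black but may change $h$. The fourth comes from the spine vertices $y_2,\dots,y_{s_t}$ that become pairs or stay red. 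We also account for the formerly uncolored vertices on the left spine of the right subtree of $t+1$ that become red; each of these releases $5.5$ and acquires potential $h^2/2+\alpha(v)$ with small $h$ and $v$.

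The core step is the per-pair accounting. For each pair $(z,w)\in P_t$ the zig--zig costs $2$ rotations. We want to show $\Phi_t(z)+\Phi_t(w)-\Phi_{t+1}(z)-\Phi_{t+1}(w)\ge 2$, up to a constant error absorbed at the top. We will use Lemma~\ref{lem:delta_not_t} and Corollary~\ref{cor:Elmasry_v}: $h_{t+1}(w)=h_t(w)-1$, so the quadratic term of $w$ drops by $h_t(w)-\tfrac12$, and $v_{t+1}(w)=v_t(w)+1$. For $z$, $h_{t+1}(z)\le h_t(z)+1$ and $v_{t+1}(z)\ge v_t(w)+v_t(z)$. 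The identity $h_t(w)=g_t(z)$ combined with Lemma~\ref{lem:nonneg_v} gives $h_t(z)\le g_t(z)=h_t(w)$. Hence the quadratic gain from $w$, namely $h_t(w)-\tfrac12$, dominates the quadratic loss at $z$, which is at most $h_t(z)+\tfrac12\le h_t(w)+\tfrac12$; this leaves a net deficit of at most $1$. The $\alpha$ terms must therefore produce a gain of at least $3$.

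Here we case on $(v_t(z),v_t(w))$ using Corollary~\ref{cor:nonnegative_v}. If $v_t(w)=0$, then $\alpha$ at $w$ drops from $5$ to $2$. If both $v$'s are positive, then $v_{t+1}(z)\ge 2$ and so $\alpha$ decreases; the slack in $h_t(z)\le h_t(w)$ must supply the rest. In fact, when $v_t(z)\ge1$ we have $h_t(z)\le g_t(z)-1$, which gains an extra unit. I expect this case analysis — especially large $v$, where $\alpha$ is already $0$ and all gains must come from the strict inequality $h_t(z)<h_t(w)$ — to be the main obstacle. If the direct bound fails, I would first try amortizing across consecutive pairs along the spine rather than pair by pair.

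Finally, we handle the top of the splay, i.e.\ the last zig or zig--zag, which costs $1$ or $2$ rotations. We also handle the vertex $y_{s_t}$ on the top spine and the black vertices of the top spine, whose $h$ values change because the red spine is restructured. The black potential $2h-2$ is designed so that its changes are at most linear, and $t+1$ leaving color $\cR$ releases at least its $\alpha$-value plus its quadratic term. We then check that the uncolored vertices newly turning red are each paid for by their $5.5$, since their new $h$ is at most $1$ and their $\alpha$ is at most $5$. Summing all contributions yields $r_{t+1}\le\Phi(T_t)-\Phi(T_{t+1})$.
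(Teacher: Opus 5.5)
Your middle-pair analysis is essentially the paper's (Lemmas~\ref{lem:red_p} and~\ref{lem:red_q}), and it is easier than you fear. Since the left child of $z$ is not $t+1$, Lemma~\ref{lem:refined_h} gives $h_{t+1}(z)=h_t(z)+1$ exactly, so $\Delta P=h_t(w)-h_t(z)-1=v_t(z)-1$. Large $v_t(z)$ is therefore trivial, and you only need $\Delta Q\ge 3-v_t(z)$. That includes the case $v_t(z)=0<v_t(w)$, which you did not list: there $\alpha$ at $z$ drops from $5$ to at most $2$.

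\textbf{The gap.} It lies at the two ends of the splaying spine. You expect the top of the splay to pay for itself, with pairs giving ``$2$ up to a constant error''. In fact the top has a deficit that nothing in your plan covers.
\begin{itemize}
\item For even $s_t\ge 4$, the zig--zag costs $2$, and $h_{t+1}(b_1)=h_t(b_1)+1$ can raise $\Phi(b_1)$ by $2$.
\item For odd $s_t\ge 5$, the top pair and the final zig cost $3$, while $c$ turns black with potential $2h_t(c)$, which can exceed its red potential.
\end{itemize}
The sources you list give no guaranteed surplus. The vertex $t+1$ has no left child, so it releases exactly $\alpha(g_t(t+1))=\max\{0,3-g_t(t+1)\}$, which is $0$ once $g_t(t+1)\ge 3$. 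A newly red vertex with a left child gets potential exactly $\tfrac12+5=5.5$ and releases nothing. So only the leaf $t+2$ can release anything, and that is guaranteed only when $g_t(t+1)=1$ and $t+1$ has a right child. If each pair is credited only with $2$, an even step with $g_t(t+1)\ge 3$ and $h_t(b_1)\ge 2$ leaves $4$ units unaccounted for.

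\textbf{The missing idea.} The \emph{bottom} pair $(z,w)$, where $z$ is the parent of $t+1$, is much better than a middle pair. By Lemma~\ref{lem:refined_h}, its $z$ satisfies $h_{t+1}(z)\le h_t(z)$ instead of $h_t(z)+1$. This gives $\Delta\Phi(w)+\Delta\Phi(z)\ge h_t(z)+\tfrac52$, and even $\ge 2h_t(z)+2$, except when $g_t(t+1)=1$ and $t+1$ has a right child; in that subcase $t+2$ releases $\tfrac72$. Add $\Delta\Phi(t+1)=\max\{0,3-h_t(z)\}$ (and $\Delta\Phi(t+2)$ in the exceptional subcase). The total is at least $6$, a surplus of $4$ beyond the bottom pair's own two rotations. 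This surplus is what pays for the top in the paper, using $\Delta\Phi(b_1)\ge -2$ in the even case and $\Delta\Phi(b_1)+\Delta\Phi(c)\ge 0$ in the odd case. By applying the generic bound $h_{t+1}(z)\le h_t(z)+1$ to the bottom pair, you discard exactly this surplus. The cases $s_t\le 3$ have no bottom pair and must be checked directly; for $s_t=3$ the paper splits on $g_t(t+1)$.

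\textbf{A factual slip.} The old root $t$ does not become black. In $T_{t+1}$ it is the left child of the new root and lies on neither spine. By rule~4 of Definition~\ref{def:coloring} it keeps colour $\cG$ (green) and potential $0$. Charging it a black potential of $1$ would already break the case $s_t=1$: there the cost is $1$ and is paid only by $\Phi_t(t+1)=1$.
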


Assuming that the above lemma is true, Theorem~\ref{thm:main} easily follows as below.

\begin{proof}[Proof of Theorem~\ref{thm:main}]
First consider when $t=0$.
Then all the vertices $x$ with $c_0(x)=1$ are on the splaying spine, which implies that $h_0 (x)\leq 1$ for any vertex $x$ with $c_0(x)=1$, and hence $\Phi_0(x)=\frac{h_t(x)^2}{2} + \alpha(v_t(x))\leq 5.5$.
If $c_0(x)=2$, then $x$ is on the top spine, and hence $\Phi_0(x) =1$ as $h_0 (x)\leq 1$.
We also have $\Phi_0(x)= 5.5$ for an uncolored vertex $x$. 
Hence, $\Phi_0 (x)\leq 5.5$ for any vertex $x\in N$, which implies that 
\[
\Phi(T_0) = \sum_{x \in N}\Phi_0(x) \leq 5.5n. 
\]
Therefore, it follows from Lemma~\ref{lem:main} that  
    \begin{equation*}\label{pot_eq}
\rot(T_0) = \sum_{t=1}^{n}r_t \leq \sum_{t=0}^{n-1}\left(\Phi(T_{t})-\Phi(T_{t+1}) \right)
\leq \Phi(T_0) - \Phi(T_n)
\leq 5.5n,
\end{equation*}
where we note that $\Phi (T_n)\geq 0$ since $\Phi_t (x)\geq 0$ for any vertex $x\in N$.
Thus, we obtain $\rot (T)\leq \rot (T_0)\leq 5.5n$, which proves Theorem~\ref{thm:main}.
\end{proof}

To show Lemma~\ref{lem:main}, we further classify the set $P_t$ of pairs at time $t$ to evaluate their potential.
A pair $(z, w) \in P_t$ is said to be the \textit{top pair} if $w$ is on the top spine, i.e., $c_t(w)=2$.
A pair $(z, w) \in P_t$ is the \textit{bottom pair} if $(z, w)$ is not the top pair and $z$ is the parent of the splaying vertex $t+1$.
If a pair is neither the top pair nor the bottom pair, the pair is called a \textit{middle} pair.
We denote by $M_t$ the set of the middle pairs at time $t$.
We also denote by $O_t$ the set of vertices not in $M_t$, that is, $O_t =N\setminus \bigcup_{(w,z)\in M_t} \{w, z\}$.

We will prove the following two lemmas in the subsequent subsections.
\begin{lemma}\label{lem:red_rotation}
    For any time $t<n$, it holds that
    \[
%    \begin{equation}\label{red_rot_ineq}
        2|M_t| \leq \sum_{(z, w)\in M_t}\left(\Phi_t(z)+\Phi_t(w)-\Phi_{t+1}(z)-\Phi_{t+1}(w)\right).
    \]
%    \end{equation}
\end{lemma}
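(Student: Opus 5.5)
The plan is to prove the inequality pair by pair: we show that for every middle pair $(z,w)\in M_t$,
\[
\Phi_t(z)+\Phi_t(w)-\Phi_{t+1}(z)-\Phi_{t+1}(w)\ \geq\ 2 .
\]
Distinct pairs of $P_t$ are vertex-disjoint, so summing over $M_t$ then gives the lemma.

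\textbf{Step 1: colors.} First I would check that $c_t(z)=c_t(w)=c_{t+1}(z)=c_{t+1}(w)=\cR$. At time $t$ both lie on the splaying spine. Since the pair is not the top pair, $w$ is not on the top spine, and hence neither is its descendant $z$. So both are red at time $t$. After $\splay{t+1}$, neither $z$ nor $w$ is the new root, and I would argue that neither lands on the new top spine. The reason is that the vertices that can reach the top spine are only those of the top pair or the vertex $y_{s_t}$ handled by the final zig/zig--zag, and $(z,w)$ is neither. Hence each of $z,w$ is either on the new splaying spine, or keeps its previous color $\cR$ by rule~4 of Definition~\ref{def:coloring}. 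Thus both potentials use the red formula $h^2/2+\alpha(v)$. I expect this combinatorial verification to be the main place where care is needed, since it depends on the exact shape after the zig--zig sequence.

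\textbf{Step 2: the quadratic part.} Set $a=h_t(z)$, $b=v_t(z)$ and $c=v_t(w)$. By Corollary~\ref{cor:nonnegative_v} we have $b,c\geq 0$. Since $z$ is the left child of $w$, we get $h_t(w)=g_t(z)=a+b$. By Lemma~\ref{lem:delta_not_t}, $h_{t+1}(w)=a+b-1$ and $h_{t+1}(z)\leq a+1$. Therefore the drop in the $h^2/2$ terms is at least
\[
\tfrac12\bigl(a^2+(a+b)^2-(a+1)^2-(a+b-1)^2\bigr)=b-1 .
\]

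\textbf{Step 3: the $\alpha$ part.} By Corollary~\ref{cor:Elmasry_v}, $v_{t+1}(w)=c+1$ and $v_{t+1}(z)\geq c+b$. For arguments $\geq 1$, $\alpha$ is nonincreasing and bounded by $2$, and $\alpha\leq 5$ always. So the total drop is at least
\[
b-1+\alpha(b)+\alpha(c)-\alpha(c+1)-\alpha^{*},
\]
where $\alpha^{*}=5$ if $c+b=0$, and $\alpha^{*}=\alpha(c+b)$ otherwise. It remains to check that this is at least $2$, by a short case analysis.

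If $b=0$ and $c=0$, the bound is $-1+5+5-2-5=2$.

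If $b=0$ and $c\geq1$, then $\alpha^{*}\leq\alpha(c)$, so the bound is at least $4-\alpha(c+1)\geq 2$.

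If $b=1$, the bound is $2+\alpha(c)-2\alpha(c+1)$. This equals $3$ for $c=0$, equals $2$ for $c=1$, and is at least $2$ for $c\geq 2$ because then $\alpha(c+1)=0$.

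If $b=2$, the bound is $2+\alpha(c)-\alpha(c+1)-\alpha(c+2)$. This equals $4$ for $c=0$, and for $c\geq1$ it is at least $2$ because $\alpha(c+2)=0$ and $\alpha(c)\geq\alpha(c+1)$.

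If $b\geq 3$, then $b-1\geq 2$ and $\alpha(c+b)=0$. Also $\alpha(c)-\alpha(c+1)\geq 0$ when $c\geq1$, and equals $3$ when $c=0$.

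In every case the drop is at least $2$, which completes the plan.
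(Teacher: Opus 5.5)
Your proof is correct and follows the paper's route. Both arguments establish a per-pair drop of at least $2$ by splitting the potential into two parts. For the quadratic part you get at least $v_t(z)-1$, where the paper gets equality via Lemma~\ref{lem:refined_h}. For the $\alpha$ part the paper's Lemma~\ref{lem:red_q} shows $\Delta Q\geq 3-v_t(z)$, using essentially the same case split on $v_t(z)$ and $v_t(w)$ that you use. Your Step~1 also explicitly verifies that $z$ and $w$ stay red at time $t+1$, which the paper uses only implicitly when it rewrites the drop as $\Delta P(w,z)+\Delta Q(w,z)$.
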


\begin{lemma}\label{lem:black_rotation}
    For any time $t<n$, it holds that
    \[
%    \begin{equation}\label{black_rot_ineq}
        r_{t+1} - 2|M_t|\leq \sum_{x\in O_t}\left(\Phi_t(x) - \Phi_{t+1}(x)\right).
%    \end{equation}
\]
\end{lemma}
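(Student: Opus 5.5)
We bound $r_{t+1}=s_t$ by splitting $s_t$ into the $2|M_t|$ rotations spent on middle pairs and the at most six or seven remaining ones: the final zig or zig--zag near the root, the top pair, and the bottom pair. Each of these remaining rotations is to be paid for by the potential drop of a vertex in $O_t$. Concretely, $s_t-2|M_t|$ equals $2\cdot[\text{top pair exists}]+2\cdot[\text{bottom pair exists}]$, plus $1$ or $2$ according to the parity of $s_t$, with small degenerate cases when $s_t\le 3$. The plan is a case analysis on (i) the parity of $s_t$ and (ii) whether a top pair and/or a bottom pair exists. For each case we list every $x\in O_t$ with $\Phi_t(x)\neq\Phi_{t+1}(x)$ and show that the total drop is at least the number of uncovered rotations.

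The vertices of $O_t$ whose potential can change are the following.
\begin{itemize}
\item The splaying vertex $t+1$. It is red (or black if $s_t=1$) with $h_t(t+1)=0$, and it becomes the green root, so it releases $\alpha(v_t(t+1))\ge 0$, or $1$ if it was black. When $v_t(t+1)=0$ it releases $5$.
\item The old root $t$. It keeps color $\cG$, so its contribution is $0$.
\item The vertices of the top pair and of the final zig--zag. Here $w=y_{s_t}$ is black and becomes red, while $z$ turns black.
\item The vertices of the bottom pair. Corollary~\ref{cor:Elmasry_v} gives $v_{t+1}(w)=v_t(w)+1$ and $v_{t+1}(z)\ge v_t(w)+v_t(z)$, which controls the change of $\alpha$. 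Lemma~\ref{lem:delta_not_t} controls the change of $h^2/2$.
\item Black vertices on the top spine above $y_{s_t}$, whose $h$ may change.
\item Vertices on the left spine of the right sub-tree of $t+1$ in $T_t$, which form the new splaying spine. Those previously uncolored drop from $5.5$ to a red potential with $h\le 1$ and $v\ge 1$, so each releases at least $5.5-(0.5+2)$. Those previously red have unchanged $h$ and $g$.
\end{itemize}
We will use the non-negativity from Corollary~\ref{cor:nonnegative_v} throughout, so that $\alpha$ is only ever evaluated at non-negative arguments.

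The accounting then proceeds as follows. For the bottom pair $(z,w)$ with $z=p_t(t+1)$, we use Lemma~\ref{lem:delta_not_t}: $h_{t+1}(w)=h_t(w)-1$ makes $h^2/2$ fall, and $v$ rises by at least one, so the $\alpha$-terms either fall (if $v_t\ge1$) or drop from $5$ (if $v_t(w)=0$). Together with the release from $t+1$, this should pay for the bottom-pair rotations plus the rotation attributable to the splaying vertex. For the top pair, the black potential $2h-2$ of $w$ is exchanged for the red potential $h^2/2+\alpha(v)$ with $h$ decreased by one. The new black vertex $z$ obtains $h_{t+1}(z)\le h_t(z)+1$, and the extra $+1$ in the black formula is meant to cover the last zig or zig--zag rotation.

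The main obstacle is this top-of-spine exchange: converting a black vertex to red can increase the potential when $h$ is small, because $\alpha$ jumps to $5$ if the new $v=0$. I would first handle separately the subcase $h_t(w)\le1$, where the black potential equals $1$, and check that $v_{t+1}(w)\ge1$ there. I would then verify, case by case, the inequality $2h-2\ge (h-1)^2/2+\alpha+(\text{rotations})$ for $h\ge2$, falling back on the release from $t+1$ or from newly colored vertices wherever the slack is insufficient.
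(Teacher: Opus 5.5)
\textbf{There is a genuine gap.} Your top-level plan matches the paper's: split $r_{t+1}-2|M_t|$ by the parity of $s_t$ and by which of the top and bottom pairs exist, then sum $\Delta\Phi$ over the vertices of $O_t$ that move. Your count of uncovered rotations ($1,3,5$ for odd $s_t$; $2,4$ for even $s_t$) is also right. The gap is in the vertex-by-vertex accounting, which rests on a wrong picture of the top of the tree.

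\textbf{The top of the tree.} For $s_t\ge 2$ the vertex $y_{s_t}=b_1$ never leaves the top spine. For odd $s_t\ge3$, the zig--zig on $(t+1,c,b_1)$ followed by the zig makes the top spine of $T_{t+1}$ equal to $t+1,c,b_1,b_2,\dots$. For even $s_t$, the zig--zag makes it $t+1,b_1,b_2,\dots$. So no black vertex ever turns red, and the ``main obstacle'' you describe does not occur. The inequality $2h-2\ge (h-1)^2/2+\alpha+(\text{rotations})$ that you plan to verify is false for $h\ge 5$ anyway, so the plan stalls on a case that never arises.

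What actually happens at the top is a potential \emph{increase}:
\begin{itemize}
\item For odd $s_t\ge5$, the top-pair vertex $c$ turns from red to black with $h_{t+1}(c)=h_t(c)+1$ (Lemma~\ref{lem:refined_h}, since its left child is not $t+1$). Hence $\Phi_{t+1}(c)=2h_t(c)$, and $\Delta\Phi(c)\ge -2+\max\{0,4-g_t(c)\}$ can be as low as $-2$.
\item This is only offset by $b_1$, which stays black, loses one from $h$, and releases $0,1,2$ according to $g_t(c)$. So $\{c,b_1\}$ nets only $\ge 0$.
\item For even $s_t\ge4$ it is worse: $h_{t+1}(b_1)=h_t(b_1)+1$, so $\Delta\Phi(b_1)$ can be $-2$.
\end{itemize}
Consequently the top rotations (three, resp.\ two) plus this deficit must all be paid from below. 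You need at least $5$ (odd) resp.\ $6$ (even) from $w,z,t+1$ (and $t+2$).

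\textbf{The bottom pair.} Your estimate there cannot deliver this. Using only $h_{t+1}(z)\le h_t(z)+1$ from Lemma~\ref{lem:delta_not_t}, you get $\Delta\Phi(z)+\Delta\Phi(w)\ge 2$. Adding $\Delta\Phi(t+1)=\max\{0,3-g_t(t+1)\}$ gives only $2$ once $g_t(t+1)\ge3$. The missing idea is that the bottom $z$ has $t+1$ as its left child, so $h_{t+1}(z)$ is $h_t(z)-1$ or $h_t(z)$, not $h_t(z)+1$. This gives Corollary~\ref{cor:bottom_poT_rem}: the bound is $\ge 2h_t(z)+2$, or $\ge h_t(z)+5/2$ when $g_t(t+1)=1$ and $t+1$ has a right child. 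Since $h_t(z)=g_t(t+1)$, this bound grows exactly as $\Delta\Phi(t+1)$ shrinks, which yields $\ge 11/2$ in the odd case. In the even subcase $g_t(t+1)=1$ one additionally needs the $7/2$ released by $t+2$ to reach $6$.

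\textbf{Smaller errors.}
\begin{itemize}
\item Your claim that every newly colored vertex releases at least $3$ is false. A newly colored vertex with a left child has $g=h=1$ and $v=0$, so its potential stays $11/2$. Only $t+2$ releases anything.
\item $v_t(t+1)=g_t(t+1)\ge1$ always, so your ``releases $5$'' case is vacuous.
\item The cases $s_t=2$ and $s_t=3$ need their own trade-offs. For example, $s_t=3$ uses $g_t(c)\ge g_t(t+1)$ from Lemma~\ref{lem:nonneg_v} together with a split on $k=g_t(t+1)$.
\end{itemize}
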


The above two lemmas immediately imply Lemma~\ref{lem:main}, since taking the summations of the two inequalities yields
\[
r_{t+1} \leq \sum_{x\in N}(\Phi_t(x) - \Phi_{t+1}(x))
= \Phi(T_{t})-\Phi(T_{t+1}).
\]

\paragraph*{Comparision with the existing analysis}
We conclude this section by comparing our analysis with the analysis by Elmasry~\cite{elmasry2004sequential}.
Elmasry defined a potential function as $\frac{h_t(x)^2}{2}$ only for a vertex $x$ with $c_t(x)=1$, using a bit more involved coloring function.
Then he used the amount of decrease of this potential function to pay the cost of rotations.

Our analysis for the middle pairs~(Lemma~\ref{lem:red_rotation}) is essentially the same as the one by Elmasry~\cite{elmasry2004sequential}. 
He showed Lemma~\ref{lem:red_p} below for his potential function, and proved that the total number of zig--zig operations is at most $3.5n$.
This implies that the number of rotations involving middle and bottom pairs is at most $6.5n$.
In our analysis, 
we modify his potential function to the one in Definition~\ref{def:potential} to reduce $6.5n$ to $5.5n$.
This can be done by Lemma~\ref{lem:red_p}, combined with a careful analysis of the behavior of $v_t(x)$ when $v_t(x)$ is small.
See Lemma~\ref{lem:red_q} in Section~\ref{sec:proofred} for the details.

For the number of rotations involving the top pairs~(Lemma~\ref{lem:black_rotation}), 
Elmasry used the simple observation that the total number of top pairs in the sequential access is at most $n$.
Since each top pair takes at most $3$ rotations~(including a zig operation to the root when $s_t$ is odd), the total number of rotations can be upper-bounded by $6.5n+3n=9.5n$.
In our analysis, 
we introduce a potential function for vertices on the top spine~(i.e., vertices $x$ with $c_t(x)=2$) as in Definition~\ref{def:potential},
and prove that, in each iteration, our new potential function decreases enough to pay the cost of rotations involving the top pairs.
In fact, in Section~\ref{sec:proofblack}, we show by case analysis that the amount of decrease of our potential function at a bottom pair and a splaying vertex can be used to pay the cost of rotations.
Thus, we do not need to count the number of rotations involving the top pairs, and hence the total number of rotations is bounded by $5.5n$.

\subsection{Proof of Lemma~\ref{lem:red_rotation}}\label{sec:proofred}

We first revisit Lemma~\ref{lem:delta_not_t}~(4) as below, which will also be used in the next subsection.
We note that, if $(z, w)$ is a pair  at time $t < n$, then the vertex $z$ has a left child $y$, and, since $y$ is on the splaying spine, $h_t(z)=g_t(y)\geq 1$.
%See the full version~\cite{} for the proofs.

\begin{lemma}\label{lem:refined_h}
Let  $(z, w)\in P_t$ be a pair  at time $t < n$, and $y$ be the left child of $z$. 
\begin{itemize}
\item If $y$ is not the splaying vertex $t+1$, then $h_{t+1}(z) = h_t(z) + 1$.
\item Suppose that $y=t+1$.
\begin{itemize}
\item If $t+1$ has no right child, then $h_{t+1}(z) = h_t(z)-1$.
\item If $t+1$ has a right child and $g_t(t+1)=1$, then $h_{t+1}(z) = h_t(z)=1$.
\item If $g_t(t+1)\geq 2$, then $h_{t+1}(z) = h_t(z)-1$.
\end{itemize}
\end{itemize}
\end{lemma}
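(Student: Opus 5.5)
The proof is a direct structural computation. I would first pin down what a single zig--zig does to left children, and then express $h_{t+1}(z)$ as a $g$-value of the new left child of $z$. Consider the zig--zig on $(y_1,z,w)$, where $y_1=t+1$ is the left child of $z$ and $z$ is the left child of $w$ at that moment. After it, $y_1$ is on top, $z$ is the right child of $y_1$, and $w$ is the right child of $z$. The subtree that was the right sub-tree of $y_1$ just before this zig--zig becomes the left sub-tree of $z$, and the right sub-tree of $z$ is unchanged. Later zig--zigs in the same splay only rearrange vertices above $z$, so in $T_{t+1}$ the left child of $z$ is the root of the right sub-tree of $t+1$ just before the zig--zig on $(t+1,z,w)$. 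Hence $h_{t+1}(z)$ equals $g_{t+1}$ of that root, or $0$ if it does not exist.

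I would also record how colors change from $T_t$ to $T_{t+1}$, using Definition~\ref{def:coloring}. Vertices of $S_t$ other than the top-spine ones and $t+1$ stay red. A vertex not in $S_t$ keeps its color unless it lies on the new splaying spine, in which case it becomes red. Whenever $(z,w)$ is a pair (top, middle or bottom), $t+1$ itself is red in $T_t$, because it lies on $S_t$ and not on the top spine.

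\emph{Case $y\neq t+1$.} Then $y$ is the upper vertex $w'$ of the preceding pair $(z',w')\in P_t$, whose zig--zig is performed just before the one on $(t+1,z,w)$. Just before our zig--zig, the right child of $t+1$ is $z'$, so in $T_{t+1}$ the left child of $z$ is $z'$. By Lemma~\ref{lem:delta_not_t}(3), $g_{t+1}(z')=g_t(w')+1=g_t(y)+1$. Since $y$ is the left child of $z$ in $T_t$, we have $g_t(y)=h_t(z)$. Together these give $h_{t+1}(z)=h_t(z)+1$.

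\emph{Case $y=t+1$.} Then $(z,w)$ is the first zig--zig, so the new left child of $z$ is the right child $r$ of $t+1$ in $T_t$, if it exists. Because $t+1$ is red, $h_t(z)=g_t(t+1)=1+g_t(r)$, with $g_t(r):=0$ if $r$ does not exist.
\begin{itemize}
\item If $t+1$ has no right child, then $h_{t+1}(z)=0=h_t(z)-1$.
\item Otherwise, I would check that $r$ lies on the new splaying spine. In $T_{t+1}$ the left spine of the right sub-tree of the root descends through the vertices of $S_t$ that were moved beneath $t+1$, reaching $y_2=z$ and then its left child $r$. So $r$ becomes red, while the other vertices on the right spine of $r$ keep their colors and right children. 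This gives $g_{t+1}(r)=\max\{1,g_t(r)\}$.
\item If $g_t(t+1)=1$, then $g_t(r)=0$, so $h_{t+1}(z)=1=h_t(z)$.
\item If $g_t(t+1)\ge 2$, then $g_t(r)\ge1$, so $h_{t+1}(z)=g_t(r)=h_t(z)-1$.
\end{itemize}

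The main obstacle is the bookkeeping of colors in the last case. One must verify carefully that $r$ is on the splaying spine of $T_{t+1}$, rather than, say, on the top spine, so that it is red. One must also verify that no other vertex on the right spine of $r$ changes color or right child during the splay.
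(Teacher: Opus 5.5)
Your proposal is correct and follows essentially the same route as the paper. In both, the new left child of $z$ is either $z'$ from the preceding pair, handled via Lemma~\ref{lem:delta_not_t}(3), or the right child $r$ of $t+1$, and the second case splits on $g_t(t+1)$. The one step you leave tacit, as the paper also does, is that $g_{t+1}(r)=\max\{1,g_t(r)\}$ needs an uncolored $r$ to have no red vertices on its right spine; this follows from the paper's observation that the colored vertices form a connected subtree containing the root.
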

\begin{proof}
Suppose that the left child $y$ of $z$ is not $t+1$. 
Then $y$ belongs to some pair $(z', y)$ at time $t$, and 
$h_{t+1}(z)= g_{t+1}(z')$.
By Lemma~\ref{lem:delta_not_t}, we have $g_{t+1}(z')=g_t(y)+1=h_t(z)+1$.
Thus, the first statement holds.

Suppose that $y=t+1$.
If $t+1$ has no right child, then $h_t(z)=g_t(y)=1$,
 and $z$ has no left child after splaying $t+1$.
Hence, $h_{t+1}(z)=0$, implying that $h_{t+1}(z) = h_t(z)-1$.
Next assume that $t+1$ has a right child $y'$.
Then $y'$ becomes a left child of $z$ after splaying $t+1$.
If $g_t(y)\geq 2$, then $g_t(y')=g_t(y)-1\geq 1$, and hence $h_{t+1}(z)=g_{t+1}(y')=g_{t}(y')=h_t(z)-1$.
Otherwise, i.e., if $g_t(y)=h_t(z)=1$, then $y'$ has no color at time $t$, and $y'$ becomes colored $1$ after splaying $t+1$.
Hence, $h_{t+1}(z)=1$.
Thus the lemma holds.
%\qed
\end{proof}

   To show Lemma~\ref{lem:red_rotation}, it suffices to prove that, for any middle pair $(z, w)\in M_t$,
\begin{equation}\label{red_pair_pot_diff1}
        \Phi_t(z)+\Phi_t(w)-\Phi_{t+1}(z)-\Phi_{t+1}(w) \geq 2.
    \end{equation}
    We evaluate the left-hand side of~\eqref{red_pair_pot_diff1} by partitioning the potential $\Phi_t$ into two parts.
    Namely, we define
    \begin{align*}    %\begin{equation}\label{delta_p_delta_q}
     %   \begin{aligned}
            \Delta P(w,z) &=\frac{h_t(z)^2}{2}+\frac{h_t(w)^2}{2}    - 
\frac{h_{t+1}(z)^2}{2}-\frac{h_{t+1}(w)^2}{2} \quad \text{and}\\
            \Delta Q(w,z) &= 
    \alpha (v_t(z))+\alpha (v_t(w)) - \alpha (v_{t+1}(z)) - \alpha (v_{t+1}(w)).
%        \end{aligned}
%    \end{equation}
    \end{align*}    
    Then \eqref{red_pair_pot_diff1} is equivalent to    \begin{equation}\label{red_pair_pot_diff2}
        \Delta P(w,z) + \Delta Q(w,z) \geq 2.
    \end{equation}

\begin{lemma}\label{lem:red_p}
%[\cite{elmasry2004sequential}]
    For any middle pair $(z, w)\in M_t$ at time $t < n$, it holds that 
%    \begin{equation*}
$\Delta P(w,z) =     v_t(z) - 1$.
%    \end{equation*}
\end{lemma}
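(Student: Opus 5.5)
We compute $h_{t+1}(w)$ and $h_{t+1}(z)$ exactly in terms of time-$t$ quantities and then expand the squares. Since $(z,w)$ is a middle pair, it is not the bottom pair. Hence the left child $y$ of $z$ is not the splaying vertex $t+1$. The first case of Lemma~\ref{lem:refined_h} therefore gives $h_{t+1}(z)=h_t(z)+1$. Lemma~\ref{lem:delta_not_t}(2) gives $h_{t+1}(w)=h_t(w)-1$. Moreover $z$ is the left child of $w$ at time $t$ and is colored $\cR$, so $h_t(w)=g_t(z)$, as already used in the proof of Corollary~\ref{cor:Elmasry_v}.

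Write $a=h_t(z)$ and $b=h_t(w)=g_t(z)$. Then
\[
\Delta P(w,z)=\frac{a^2+b^2-(a+1)^2-(b-1)^2}{2}=\frac{-2a-1+2b-1}{2}=b-a-1 .
\]
Finally, $b-a=g_t(z)-h_t(z)=v_t(z)$, which yields $\Delta P(w,z)=v_t(z)-1$.

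The only point needing care is that the two $h$-updates really apply to a middle pair. The middle-pair hypothesis is what rules out the other cases of Lemma~\ref{lem:refined_h}: it excludes the bottom pair, where $y=t+1$ and $h_{t+1}(z)$ may instead equal $h_t(z)-1$ or stay at $1$. The hypothesis also excludes the top pair. This guarantees that $w$ is colored $\cR$ rather than on the top spine, so $\Phi_t(w)$ uses the $h^2/2$ form. It also guarantees that the pair lies strictly inside the splaying spine, so $w$ still has a left child and Lemma~\ref{lem:delta_not_t} applies directly.
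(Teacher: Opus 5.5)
Your proof is correct and is essentially the paper's argument. Because a middle pair is not the bottom pair, the first case of Lemma~\ref{lem:refined_h} and Lemma~\ref{lem:delta_not_t}(2) give the two $h$-updates, and expanding the squares with $h_t(w)=g_t(z)$ yields $v_t(z)-1$; your closing remark about excluding the top pair is harmless but unnecessary, since Lemma~\ref{lem:delta_not_t} holds for every pair in $P_t$ and $\Delta P$ depends only on the values of $h$.
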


\begin{proof}
    Since $(z,w)$ is a middle pair, the left child of $z$ is not $t+1$.
    Hence, by Lemmas~\ref{lem:delta_not_t} and~\ref{lem:refined_h}, we have $h_{t+1}(w) = h_t(w) - 1$ and $h_{t+1}(z) = h_t(z) + 1$, which implies that
    \begin{align*}
    \Delta P(w,z) &=\frac{1}{2}\left(h_t(z)^2+h_t(w)^2- 
(h_t(z) + 1)^2-(h_t(w) - 1)^2\right)\\
& = h_t(w)-h_t(z) - 1 = v_t(z) - 1,
    \end{align*}
    as $h_t(w)=g_t(z)$.
\end{proof}

Since $\alpha$ is a decreasing function, 
we have $\alpha (v_t(z))\geq \alpha (v_{t+1}(z))$ and $\alpha (v_t(w))\geq \alpha (v_{t+1}(w))$ by Corollary~\ref{cor:Elmasry_v}, and hence $\Delta Q(w,z)\geq 0$.
    Moreover, we have the following lemma, which holds for any (not necessarily middle) pair.

\begin{lemma}\label{lem:red_q}
%[\cite{elmasry2004sequential}]
    For any pair $(z, w)\in P_t$ at time $t < n$, it holds that 
%    \begin{equation*}
$\Delta Q(w,z) \geq 3 - v_t(z)$.
%    \end{equation*}
\end{lemma}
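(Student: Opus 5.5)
The plan is a short case analysis on the small values of $v_t(z)$ and $v_t(w)$, using only Corollary~\ref{cor:Elmasry_v} and the monotonicity of $\alpha$. Write $a=v_t(z)$ and $b=v_t(w)$. By Corollary~\ref{cor:nonnegative_v}, $a,b\ge 0$ whenever the vertices are colored $\cR$. By Corollary~\ref{cor:Elmasry_v},
\[
v_{t+1}(w)=b+1 \quad\text{and}\quad v_{t+1}(z)\ge a+b .
\]
On nonnegative integers $\alpha$ is nonincreasing, with values $\alpha(0)=5$, $\alpha(1)=2$, $\alpha(2)=1$ and $\alpha(n)=0$ for $n\ge 3$. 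Hence we may split
\[
\Delta Q(w,z)=\bigl[\alpha(a)-\alpha(v_{t+1}(z))\bigr]+\bigl[\alpha(b)-\alpha(b+1)\bigr]
\]
and bound it below by $\bigl[\alpha(a)-\alpha(a+b)\bigr]+\bigl[\alpha(b)-\alpha(b+1)\bigr]$. Both brackets are nonnegative.

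If $a\ge 3$, the right-hand side $3-a$ is nonpositive and $\Delta Q\ge 0$ suffices. For the $w$-term, $\alpha(b)-\alpha(b+1)$ equals $3,1,1$ for $b=0,1,2$ and $0$ for $b\ge 3$. The remaining cases are:
\begin{itemize}
\item $a=0$ (need $3$). If $b=0$, the $w$-term already gives $3$. If $b\ge 1$, the $z$-term gives $5-\alpha(b)\ge 5-2=3$.
\item $a=1$ (need $2$). If $b=0$, the $w$-term gives $3$. If $b=1$, we get $(2-\alpha(2))+1=2$. If $b\ge 2$, the $z$-term gives $2-0=2$.
\item $a=2$ (need $1$). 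If $b=0$, the $w$-term gives $3$. If $b\ge 1$, the $z$-term gives $1-\alpha(3)=1$.
\end{itemize}

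The main point of care is the hypothesis that $a,b\ge 0$, so that $\alpha$ is evaluated in its monotone range. For a top pair, $w$ is colored $\cB$ and $v_t(w)$ may be nonpositive. In that case I would check directly how $\Delta Q$ is interpreted: only the $\cR$-colored terms carry $\alpha$. If $w$'s term is absent or the $w$-contribution is nonnegative, the $z$-term alone still gives $\alpha(a)-\alpha(a+b')$ with $b'\ge 1$ obtained from Lemma~\ref{lem:delta_not_t}(3), and the same table applies. Otherwise the argument is the routine finite verification above.
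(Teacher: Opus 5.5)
Your proof is correct and follows the paper's route: you bound $\Delta Q(w,z)$ from below using Corollary~\ref{cor:Elmasry_v}, Corollary~\ref{cor:nonnegative_v} and the monotonicity of $\alpha$ on nonnegative integers, and your table over $v_t(z)\in\{0,1,2\}$ just reorganizes the paper's four cases. Your closing paragraph on the top pair is unnecessary, and its appeal to Lemma~\ref{lem:delta_not_t}(3) there is not reliable: for the top pair $v_t(w)<0$ lies outside the domain of $\alpha$, and $z$ moves onto the top spine. The lemma is only ever applied to middle and bottom pairs, where both $z$ and $w$ are colored $\cR$, which is exactly the setting the paper's case split implicitly assumes.
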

\begin{proof}
    We prove the lemma by case analysis.
    To simplify the notation, we denote $\Delta P = \Delta P(w,z)$ and $\Delta Q = \Delta Q(w,z)$ in the proof.
    \begin{description}
        \item[1. When $v_t(z) \geq 3$]%\leavevmode\par
        %\noindent
%        \subparagraph*{1. When $v_t(z) \geq 3$.}
        Since $\Delta Q \geq 0\geq 3- v_t(z)$, the lemma holds.
        
        \item [2. When $1\leq v_t(z)\leq 2$ and $1\leq v_t(w) \leq 2$]%\leavevmode\par
        %\noindent
        %\subparagraph*{2. When $1\leq v_t(z)\leq 2$ and $1\leq v_t(w) \leq 2$.}
        Since $v_{t+1}(z)-v_t(z)\geq v_t(w)\geq 1$ by Corollary~\ref{cor:Elmasry_v}, it holds that $\alpha(v_t(z))-\alpha(v_{t+1}(z))\geq 1$.
        Similarly, we obtain $\alpha(v_t(w))-\alpha(v_{t+1}(w))\geq 1$, as $v_{t+1}(w)-v_t(w)=1$ by Corollary~\ref{cor:Elmasry_v}.
        Hence $\Delta Q \geq 2$ holds, which implies $\Delta Q \geq 2\geq 3 - v_t(z)$.
        
        \item [3. When $1\leq v_t(z) \leq 2$ and $v_t(w) \geq 3$]%\leavevmode\par
        %\noindent
        %\subparagraph*{3. When $1\leq v_t(z) \leq 2$ and $v_t(w) \geq 3$.}
        Since $1\leq v_t(z) \leq 2$, we have $\alpha(v_t(z)) = 3-v_t(z)$.
        Since $v_t(z) \geq 0$ by Corollary~\ref{cor:nonnegative_v}, it holds by Lemma~\ref{lem:delta_not_t} that $v_{t+1}(z)\geq v_t (w)\geq 3$, and hence $\alpha(v_{t+1}(z)) = 0$.
        Therefore, we see $\Delta Q \geq \alpha(v_t(z))-\alpha(v_{t+1}(z)) \geq 3-v_t(z)$.
        
        \item [4. When $v_t(z) = 0$ or $v_t(w) = 0$]%\leavevmode\par
        %\noindent
        %\subparagraph*{4. When $v_t(z) = 0$ or $v_t(w) = 0$.}
        When $v_t(w) = 0$, we have $v_{t+1}(w) =1$ by Lemma~\ref{lem:delta_not_t}.
        Hence $\alpha(v_{t+1}(w))-\alpha(v_t(w))\geq 3$.
        On the other hand, when $v_t(z) = 0$ but $v_t(w) \geq 1$, we see $v_{t+1}(z) \geq v_t(w) \geq 1$ by Lemma~\ref{lem:delta_not_t}, and hence $\alpha(v_{t+1}(v))-\alpha(v_t(v))\geq 3$.
        Therefore, we have $\Delta Q \geq 3\geq 3-v_t(z)$ in each case.
    \end{description}
\end{proof}

The inequality~\eqref{red_pair_pot_diff2} easily follows from Lemmas~\ref{lem:red_p} and~\ref{lem:red_q}, which implies Lemma~\ref{lem:red_rotation}.
\qed

\subsection{Proof of Lemma~\ref{lem:black_rotation}}\label{sec:proofblack}

We first analyze the bottom pair $(z,w)$~(assuming it exists).
We define $\Delta \Phi(x) = \Phi_{t}(x)-\Phi_{t+1}(x)$ for a vertex $x\in N$.

\begin{corollary}\label{cor:bottom_poT_rem}
        Suppose that there exists the bottom pair $(z, w)$ at time $t<n$.
\begin{itemize}
\item If $g_t(t+1)=1$ and the vertex $t+1$ has a right child, then 
\[
\Delta \Phi(w) + \Delta \Phi(z) \geq h_t(z)+\frac{5}{2}.
\]
\item Otherwise, that is, if either $g_t(t+1)\geq 2$ or $g_t(t+1)=1$ but $t+1$ has no right child, then 
\[
\Delta \Phi(w) + \Delta \Phi(z) \geq 2h_t(z)+2.
\]
\end{itemize}
\end{corollary}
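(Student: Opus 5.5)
The plan is to reuse the decomposition from the proof of Lemma~\ref{lem:red_rotation}, namely $\Delta\Phi(w)+\Delta\Phi(z)=\Delta P(w,z)+\Delta Q(w,z)$. We compute $\Delta P$ exactly with Lemmas~\ref{lem:delta_not_t} and~\ref{lem:refined_h}, and bound $\Delta Q$ from below with Lemma~\ref{lem:red_q}, which holds for every pair, not only middle ones.

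First I need both $z$ and $w$ to have colour $\cR$ at time $t$ and at time $t+1$, so that $\Phi_t$ and $\Phi_{t+1}$ are given by the formula $\frac{h^2}{2}+\alpha(v)$ on both vertices. At time $t$ both lie on the splaying spine. Since the pair is not the top pair, $w$ is not on the top spine, and so neither is its descendant $z$. After the splay, $z$ and $w$ lie in the left subtree of the new root $t+1$. Hence they are on neither the top spine nor the new splaying spine, and by Definition~\ref{def:coloring} they keep colour $\cR$. This is the same implicit fact already used in Lemma~\ref{lem:red_p}; I will state it briefly.

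Next set $k=h_t(z)=g_t(t+1)\geq 1$ and $a=h_t(w)=g_t(z)$, so that $v_t(z)=a-k$. By Lemma~\ref{lem:delta_not_t}(2), $h_{t+1}(w)=a-1$, so $w$ contributes $\frac{a^2-(a-1)^2}{2}=a-\frac12$ to $\Delta P$. The contribution of $z$ depends on the cases of Lemma~\ref{lem:refined_h}, since its left child is the splaying vertex $t+1$.

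In the first case of the statement ($g_t(t+1)=1$ and $t+1$ has a right child) we have $h_{t+1}(z)=h_t(z)=1$. Then $z$ contributes $0$, so $\Delta P=a-\frac12=v_t(z)+k-\frac12$. Adding $\Delta Q\geq 3-v_t(z)$ from Lemma~\ref{lem:red_q} gives $\Delta\Phi(w)+\Delta\Phi(z)\geq k+\frac52=h_t(z)+\frac52$.

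In the remaining cases Lemma~\ref{lem:refined_h} gives $h_{t+1}(z)=k-1$. Then $z$ contributes $\frac{k^2-(k-1)^2}{2}=k-\frac12$, so $\Delta P=a+k-1=v_t(z)+2k-1$. Adding Lemma~\ref{lem:red_q} yields $2k+2=2h_t(z)+2$.

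I expect the only delicate point to be the colour-preservation check together with the identity $h_t(w)=g_t(z)$ (the left child of $w$ is $z$). Everything else is a direct substitution.
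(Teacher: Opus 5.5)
Your decomposition $\Delta\Phi(w)+\Delta\Phi(z)=\Delta P(w,z)+\Delta Q(w,z)$ is the paper's proof, and so is the rest of the argument: the exact evaluation of $\Delta P$ via Lemmas~\ref{lem:delta_not_t} and~\ref{lem:refined_h}, and the bound $\Delta Q\geq 3-v_t(z)$ from Lemma~\ref{lem:red_q}. Your algebra in both cases is correct.

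The step that is wrong is the colour-preservation argument you single out as the delicate point. The vertex $t+1$ is the deepest vertex of the splaying spine, i.e.\ the minimum of the right sub-tree of $t$. So every other vertex of $S_t$, in particular $z$ and $w$, is larger than $t+1$. After the splay they therefore lie in the \emph{right} sub-tree of the new root, not the left one.

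Moreover, $z$ does lie on the new splaying spine. The zig--zig on $(t+1,z,w)$ makes $w$ the right child of $z$ and the old right sub-tree of $t+1$ the left sub-tree of $z$. The next zig--zig on $(t+1,z',w')$ (or, if there is none, the final zig--zag) then hangs the sub-tree rooted at $z$ as the left child of $z'$ (or of $b_1$), and this edge is not touched afterwards. This is the same structural fact behind $h_{t+1}(z)=g_{t+1}(z')$ in the proof of Lemma~\ref{lem:refined_h}. So your assertion that $z$ and $w$ are on neither spine at time $t+1$ is false for $z$.

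The conclusion you need still holds, for a different reason. In $T_{t+1}$ the vertex $z$ is a left child, and $w$ is the right child of $z$. The top spine contains no left child, and the parent of a top-spine vertex is again on the top spine. Hence neither $z$ nor $w$ lies on the top spine of $T_{t+1}$. A vertex of colour $\cR$ at time $t$ that is off the top spine at time $t+1$ keeps colour $\cR$: by rule~3 of Definition~\ref{def:coloring} if it lies on $S_{t+1}$, as $z$ does, and by rule~4 otherwise, as $w$ does. With this replacement your argument is complete and coincides with the paper's, which leaves this colour check implicit.
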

\begin{proof}
     Suppose that $g_t(t+1)=1$ and $t+1$ has a right child.
     By Lemmas~\ref{lem:delta_not_t} and~\ref{lem:refined_h}, it holds that $h_{t+1}(z) = h_t(z)$ and $h_{t+1}(w) = h_t(w) - 1$.
    Hence, we have 
    \begin{align*}
    \frac{1}{2}\left(h_t(z)^2+h_t(w)^2- 
h_{t+1}(z)^2-h_{t+1}(w)^2\right) &= \frac{1}{2}\left(h_t(z)^2+h_t(w)^2- 
h_t(z)^2-(h_t(w) - 1)^2\right)\\
& = h_t(w)- \frac{1}{2} = g_t(z)- \frac{1}{2},
    \end{align*}
    as $h_t(w)=g_t(z)$.
    Moreover, Lemma~\ref{lem:red_q} says that
    \begin{equation}\label{eq:deltaQ}
\alpha (v_t(z))+\alpha (v_t(w)) - \alpha (v_{t+1}(z)) - \alpha (v_{t+1}(w))     \geq 3 - v_t(z).
    \end{equation}
    Therefore, we obtain
\[
\Delta \Phi(w) + \Delta \Phi(z) \geq \left(g_t(z)- \frac{1}{2}\right) + \left(3 - v_t(z)\right) = h_t(z)+\frac{5}{2}.
\]

    Suppose that either $g_t(t+1)\geq 2$ or $g_t(t+1)=1$ but $t+1$ has no right child.
    Then we have $h_{t+1}(z) = h_t(z)-1$  by Lemma~\ref{lem:refined_h}.
    It follows that 
    \begin{align*}
    &\frac{1}{2}\left(h_t(z)^2+h_t(w)^2- 
h_{t+1}(z)^2-h_{t+1}(w)^2\right)\\ &= \frac{1}{2}\left(h_t(z)^2+h_t(w)^2- 
(h_t(z)-1)^2-(h_t(w) - 1)^2\right)
= h_t(z)+h_t(w)-1.
    \end{align*}
    Therefore, using~\eqref{eq:deltaQ}, we obtain
\[
\Delta \Phi(w) + \Delta \Phi(z) \geq \left(h_t(z)+h_t(w)-1\right)+ \left(3 - v_t(z)\right) = 2h_t(z)+2,
\]
as $v_t(z)=g_t(z)-h_t(z)$ and $h_t(w)=g_t(z)$.
\end{proof}

For a splay tree $T_t$ at time $t<n$, we denote the vertices on the top spine of $T_t$ by $b_0, b_1, b_2, \ldots, b_\ell$ where $b_i$ is the parent of $b_{i+1}$ for $i=1,\dots ,\ell -1$.
Thus, $b_0$ is the root of $T_t$.
Recall that $s_t$ denotes the number of vertices on the splaying spine of $T_t$.

\begin{lemma}\label{lem:black_rotation_odd}
    For any time $t<n$ such that $s_t$ is odd, it holds that
    \[
        \sum_{x\in O_t}(\Phi_t(x) - \Phi_{t+1}(x)) \geq
        \begin{cases}
        1 & (\text{if\ }s_t=1)\\
        3 & (\text{if\ } s_t=3)\\
        5 & (\text{if\ } s_t\geq 5)
        \end{cases}.
    \]
\end{lemma}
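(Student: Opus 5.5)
We prove the bound by tracking the structural change along the top spine and the splaying spine separately for the three cases $s_t=1$, $s_t=3$, $s_t\ge 5$. We first show that every vertex of $O_t$ other than a few distinguished ones (the root $t$, the splaying vertex $y_1=t+1$, the top pair $(y_{s_t-1},y_{s_t})$, the bottom pair $(y_2,y_3)$ when $s_t\ge5$, and the vertices on the new splaying spine of $T_{t+1}$) has $\Delta\Phi(x)\ge 0$.

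\emph{Untouched vertices.} Vertices off both spines keep their colour, $g$ and $h$, so $\Delta\Phi(x)=0$. The top spine vertices $b_2,\dots,b_\ell$ keep their $h$, since only their ancestors change.

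\emph{The root $t$.} It becomes the left child of $y_1$. It is neither on the top spine nor on the splaying spine, so it keeps colour $\cG$ and potential $0$.

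\emph{The new splaying spine.} This is the left spine of the right sub-tree of $y_1$ in $T_{t+1}$, which is the left spine of the former right sub-tree of $y_1$, or of $b_2$ when $s_t=1$. Uncoloured vertices there go from $5.5$ to a red potential with $h_{t+1}\le 1$, hence at most $\tfrac12+5$. Already red vertices there keep $h$, and $g$ can only grow, so $\alpha$ does not increase. Thus these contribute $\ge 0$, and the task reduces to accounting for the distinguished vertices.

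\textbf{Case $s_t=1$.} Here $y_1=b_1$ is black, and $\splay{t+1}$ is a single zig. Then $y_1$ becomes green, gaining $\Phi_t(y_1)\ge 1$ by the definition of black potential. Nothing else among the distinguished vertices loses potential, which gives $1$.

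\textbf{Cases $s_t\ge3$ ($s_t$ odd).} After the zig--zig on the top pair $(y_{s_t-1},y_{s_t})$ and the final zig on $(y_1,t)$, the top spine of $T_{t+1}$ is $y_1,y_{s_t-1},y_{s_t},b_2,\dots$. So:
\begin{itemize}
\item $y_1$ turns from red to green. Since it has no left child, $h_t(y_1)=0$ and its gain is $\alpha(g_t(y_1))\ge 0$.
\item $y_{s_t-1}$ turns from red to black, with $h_{t+1}(y_{s_t-1})$ given by Lemma~\ref{lem:refined_h}.
\item $y_{s_t}$ stays black with $h$ dropping by one, by Lemma~\ref{lem:delta_not_t}(2). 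Since $h_t(y_{s_t})=g_t(y_{s_t-1})$, this gives a gain of $2$ when $g_t(y_{s_t-1})\ge3$, a gain of $1$ when it equals $2$, and $0$ otherwise.
\end{itemize}

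For $s_t\ge5$, Corollary~\ref{cor:bottom_poT_rem} gives at least $h_t(z)+\tfrac52\ge\tfrac72$ from the bottom pair, since $h_t(z)\ge1$. It therefore suffices to show that the top pair together with $y_1$ contributes at least $\tfrac32$. For $s_t=3$ the top pair is also the one adjacent to $t+1$, so we use the three sub-cases of Lemma~\ref{lem:refined_h} directly, together with the $\alpha$-gain of $y_1$, to get $3$.

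\textbf{Main obstacle: the top pair.} The red-to-black conversion of $y_{s_t-1}$ is the hardest step. The old potential is $h^2/2+\alpha(v)$ with $h=h_t(y_{s_t-1})$, and the new one is $2(h+1)-2=2h$ or $1$. This can be a loss when $h$ is small, e.g.\ $h=1,2$ with $\alpha(v)=0$.

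I would first bound the combined change
\[
\Delta\Phi(y_{s_t-1})+\Delta\Phi(y_{s_t})+\Delta\Phi(y_1)
\]
by cases on $v_t(y_{s_t-1})=g_t(y_{s_t-1})-h$. When $v$ is small, $\alpha(v)$ is large and pays for the conversion. When $v$ is large, $g_t(y_{s_t-1})\ge h+3$, so $y_{s_t}$ gains $2$, and the loss $2h-h^2/2\le 2$ is offset. If this does not leave a margin of $\tfrac32$, the fallback is to borrow from the slack in Corollary~\ref{cor:bottom_poT_rem}: its bound grows with $h_t(z)$, and by Lemma~\ref{lem:nonneg_v} the quantity $g_t(y_{s_t-1})$ is at least $g$ of the bottom vertices. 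A careful finite case table in the small values of $h$, $v$ and $g_t(y_1)$ should close both the $s_t=3$ target of $3$ and the $s_t\ge5$ target of $5$.
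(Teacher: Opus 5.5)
Your setup matches the paper's: the same distinguished vertices, the same treatment of $s_t=1$, and the same three-valued formula for $\Delta\Phi(y_{s_t})$. However, the cases $s_t=3$ and $s_t\ge5$ carry the whole content of the lemma, and you leave both as a plan. The one concrete reduction you commit to for $s_t\ge5$ cannot be established from the bounds at hand. Taking the bottom pair at its worst value $\frac72$ and asking the top pair plus $y_1$ for $\frac32$ throws away exactly the slack that is needed. Write $c=y_{s_t-1}$, $k=h_t(c)$, $g=g_t(c)$. Then $\Phi_{t+1}(c)=2k$, so $\Delta\Phi(c)+\Delta\Phi(b_1)=\frac{k^2}{2}-2k+\alpha(g-k)+\Delta\Phi(b_1)$. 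This equals $0$ for $k=2$, $g\ge5$, and $\frac12$ for $k=3$, $g\ge6$. Meanwhile all you know about $y_1$ is $g_t(t+1)\le k$, so $\Delta\Phi(t+1)=\max\{0,3-g_t(t+1)\}$ may be only $1$, respectively $0$.

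The missing idea is to group $y_1$ with the bottom pair, not the top pair. Since $t+1$ is the left child of $z$, we have $h_t(z)=g_t(t+1)$. Hence $\Delta\Phi(w)+\Delta\Phi(z)+\Delta\Phi(t+1)\ge h_t(z)+\frac52+\max\{0,3-h_t(z)\}\ge\frac{11}{2}$. It then suffices that the top pair is nonnegative, which follows from three facts: $\frac{k^2}{2}-2k\ge-2$; $\alpha(g-k)\ge\max\{0,4-g\}$ since $k\ge1$; and $\max\{0,4-g\}+\Delta\Phi(b_1)\ge2$ for every $g\ge1$. Your fallback, which goes through Lemma~\ref{lem:nonneg_v} and $g_t(y_{s_t-1})$, is not this coupling.

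For $s_t=3$ you only name the tools. The paper splits on $k=g_t(t+1)=h_t(c)$, using $g_t(c)\ge k$ from Lemma~\ref{lem:nonneg_v}:
\begin{itemize}
\item For $k\ge3$: $\Delta\Phi(c)\ge\frac{k^2}{2}-2k+4\ge\frac52$ and $\Delta\Phi(b_1)=2$.
\item For $k=2$: each of $\Delta\Phi(c)$, $\Delta\Phi(t+1)$, $\Delta\Phi(b_1)$ is at least $1$.
\item For $k=1$: $\Delta\Phi(t+1)=2$ and $\Delta\Phi(c)\ge-\frac12+\max\{0,4-g_t(c)\}$, which combines with $\Delta\Phi(b_1)$ through the same $\ge2$ fact.
\end{itemize}
Without this table the proposal is not a proof.

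Separately, your claim that already-red vertices on the new splaying spine keep $h$ is false. A red vertex on the left spine of the old right subtree of $t+1$ can have an uncoloured left child, for instance a former $w$ whose partner $z$ had $g=1$. Its $h$ then goes from $0$ to $1$, and its potential can rise by $\frac72$. This is repairable: the new splaying vertex $t+2$ below it is then uncoloured and drops from $5.5$ to $\alpha(1)=2$, so that spine still contributes $\ge0$ in total. The paper's one-line justification passes over the same point.
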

%\begin{proof}[Proof of Lemma~\ref{black_rotation}]
\begin{proof}
We note that, if a vertex $x\in O_t$ is not contained in the top or bottom pair, then $\Delta \Phi (x)\geq 0$ holds, since $x$ is not involved in splaying if $x\neq t+1$ and $\Delta \Phi (t+1)\geq 0$.

%        Due to the space limitation, we omit the proof for the case when $s_t=1$.
%    See the full version~\cite{} for the details.

    \subparagraph*{When $s_t=1$~(Figure~\ref{fig:b1=t}).}
    %\leavevmode\par
We see that $b_1=t+1$ and, after splaying $t+1$, the left spine of $b_2$ becomes the splaying spine.
Since  $\Delta \Phi(b_1) = 1$ and $\Delta \Phi(x) \geq 0$ for any $x\in O_t$, we have $\sum_{x\in O_t}(\Phi_t(x) - \Phi_{t+1}(x)) \geq \Delta \Phi(b_1)=1$.

    \begin{figure}[ht]
        \centering
        \includegraphics[scale=0.3]{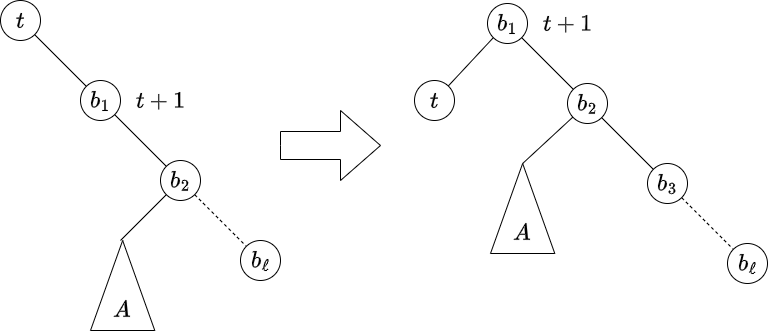}
        \caption{When $s_t = 1$.}
        \label{fig:b1=t}
    \end{figure}

%\begin{description}

\subparagraph*{When $s_t=3$~(Figure~\ref{c_and_t}).}
    \begin{figure}[ht]
        \centering
        \includegraphics[scale=0.3]{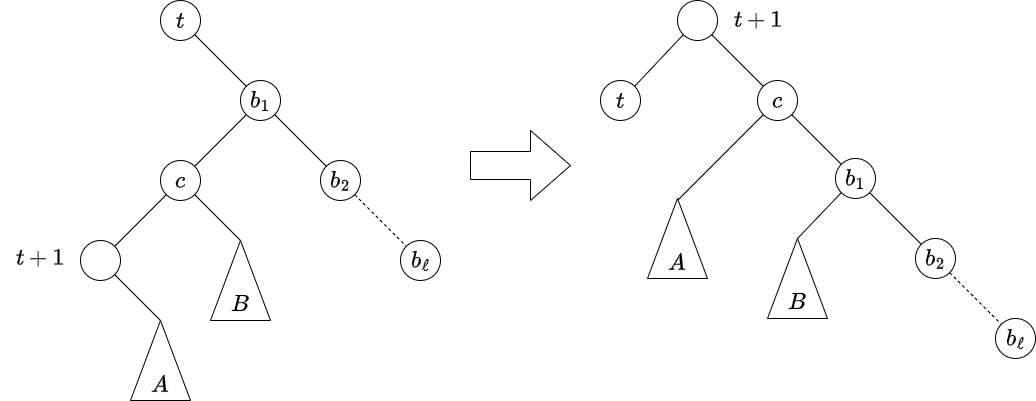}
        \caption{When $s_t=3$.}
        \label{c_and_t}
    \end{figure}
    There exists a top pair, denoted by $(c, b_1)$.
    After splaying $t+1$, the vertex $c$ moves to the top spine.
    Also, since $s_t=3$, there exists no bottom pair.

Since $\Delta\Phi(x) \geq 0$ for any $x \in O_t$ with $x\not\in\{b_1, c\}$, we have 
$\sum_{x\in O_t}(\Phi_t(x) - \Phi_{t+1}(x))\geq \Delta \Phi (b_1)+\Delta \Phi (c)+ \Delta \Phi (t+1)$.
Since $h_t(b_1)=g_t(c)$ and $h_{t+1}(b_1)=g_t(c)-1$ by Lemma~\ref{lem:delta_not_t}, we observe from the definition of $\Phi_t$ that 
\begin{equation}\label{eq:delta_b1}
\Delta \Phi (b_1)
=
\begin{cases}
0 & (g_t(c)=1)\\
1 & (g_t(c)=2)\\
2 & (g_t(c)\geq 3)
\end{cases}.
\end{equation}

Let $k=g_t(t+1)\geq 1$.
We will show $\Delta \Phi (b_1)+\Delta \Phi (c)+ \Delta \Phi (t+1)\geq 3$ by case analysis based on the value of $k$.
We note that $h_t(c)=k$ and also $g_t(c)\geq k$ by Lemma~\ref{lem:nonneg_v}.
\begin{enumerate}
\item Consider the case when $k\geq 3$.
Then, we see that $h_{t+1}(c)=k-1$ by Lemma~\ref{lem:refined_h}.
Since $\Phi_t (c)\geq k^2/2$ and $\Phi_{t+1} (c) = 2(k-1) -2$, we have
$\Delta \Phi (c) \geq k^2/2 -  2 k +4 \geq 5/2$ for $k\geq 3$.
Moreover, since $g_t(c)\geq k\geq 3$, we have $\Delta \Phi (b_1)= 2$ by~\eqref{eq:delta_b1}.
Hence, since $\Delta \Phi (t+1)=0$, it holds that $\Delta \Phi (b_1)+\Delta \Phi (c)+ \Delta \Phi (t+1) \geq  5/2+2\geq 3$.

\item Consider the case when $k=2$.
Then, since $\Phi_t (c)\geq k^2/2 = 2$ and $\Phi_{t+1} (c) = 1$ as $h_{t+1}(c)=k-1=1$, we have
$\Delta \Phi (c) \geq 2 - 1 = 1$.
On the other hand, we see that 
$\Delta \Phi (t+1) = \Phi_t (t+1) = \max\{0, 3-k\} = 1$ as $h_t(t+1)=0$.
Moreover, since $g_t(c)\geq 2$, we have $\Delta \Phi(b_1)\geq 1$ by~\eqref{eq:delta_b1}.
Hence, we obtain $\Delta \Phi (b_1)+\Delta \Phi (c)+ \Delta \Phi (t+1) \geq 3$.

\item Consider the case when $k=1$.
Then, since $\Phi_t (c)= k^2/2 +\max\{0, 3-v_t(c)\}$ and $\Phi_{t+1} (c) = 1$ as $h_{t+1}(c)\leq 1$, we have
\[
\Delta \Phi (c) = \frac{k^2}{2}  +\max\{0, 3-v_t(c)\}- 1
= -\frac{1}{2}+\max\{0, 4-g_t(c)\},
\]
since $v_t (c)= g_t(c)-h_t(c)=g_t(c)-k$ and $k=1$.
On the other hand, we see that 
$\Delta \Phi (t+1) = \Phi_t (t+1) = \max\{0, 3-k\} = 2$.
Therefore, we have 
\begin{align*}
\Delta \Phi(b_1) + \Delta \Phi(c)+\Delta \Phi(t+1)\geq 
2+\left(-\frac{1}{2}+\max\{0, 4-g_t(c)\}\right) +\Delta \Phi(b_1) 
\geq \frac{7}{2},
\end{align*}
where the last inequality follows from~\eqref{eq:delta_b1}.
\end{enumerate}
Therefore, we have $\sum_{x\in O_t}(\Phi_t(x) - \Phi_{t+1}(x))\geq 3$ in each case, which implies the lemma.

\subparagraph*{When $s_t\geq 5$~(Figure~\ref{fig:c_and_bot}).}
        \begin{figure}[ht]
            \centering
            \includegraphics[scale=0.3]{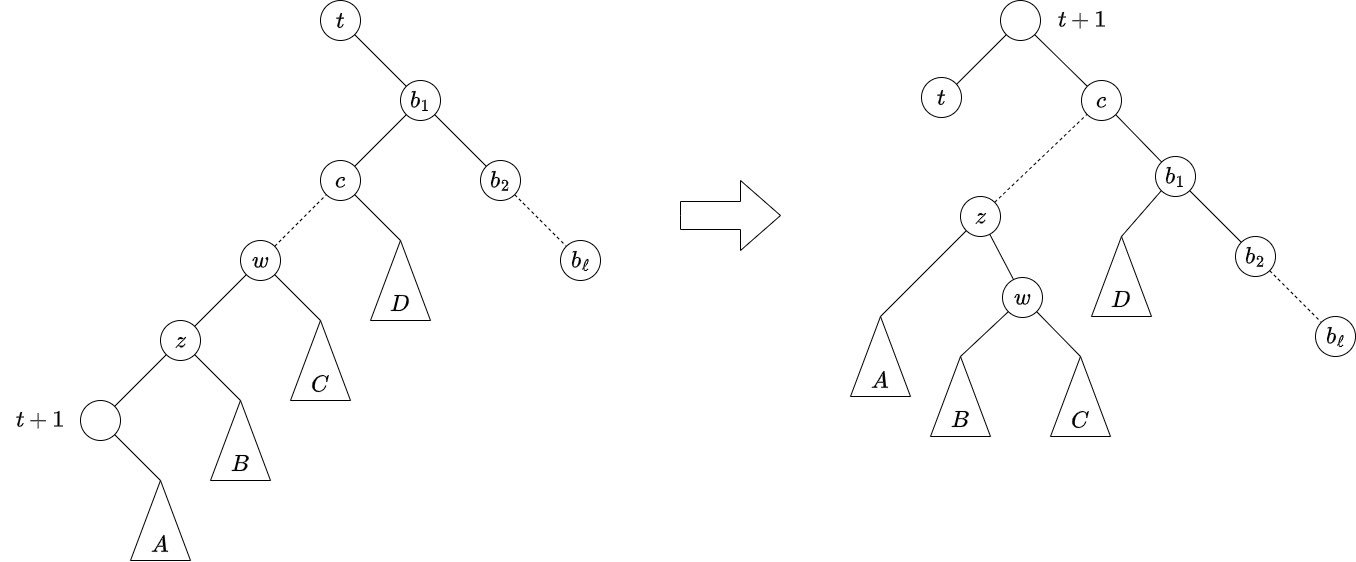}
            \caption{When $s_t\geq 5$.}
            \label{fig:c_and_bot}
        \end{figure}
    There exists a top pair, denoted by $(c, b_1)$, and, after splaying the vertex $t+1$, the vertex $c$ moves to the top spine.
    Also, there exists a bottom pair, denoted by $(z, w)$.
    That is, the left child of $z$ is $t+1$.
    
    Since $\Delta\Phi(x) \geq 0$ for any $x \in O_t$ with $x\not\in\{b_1,c,w,z\}$, we have 
\[
\sum_{x\in O_t}(\Phi_t(x) - \Phi_{t+1}(x))\geq \Delta \Phi (b_1)+\Delta \Phi (c)+\Delta \Phi (w)+\Delta \Phi (z)+ \Delta \Phi (t+1).
\]
        By Corollary~\ref{cor:bottom_poT_rem}, we have 
%\[
$\Delta \Phi(w) + \Delta \Phi(z) \geq h_t(z)+\frac{5}{2}$,
%\]
noting that $2h_t(z)+2\geq h_t(z)+\frac{5}{2}$ since $h_t(z)\geq 1$.
Moreover, since $g_t(t+1) \geq 1$ and $v_t(t+1) = g_t(t+1)$, we have $\Delta\Phi(t+1) = \max\{0, 3-g_t(t+1)\} = \max\{0, 3-h_t(z)\}$.
        Therefore, 
        \[    \Delta\Phi(w)+\Delta\Phi(z) + \Delta\Phi(t+1) \geq \left(h_t(z)+ \frac{5}{2}\right)+\max\{0, 3-h_t(z)\}   \geq \frac{11}{2},
        \]
        and thus        \begin{equation}\label{eq:atleast5}
        \sum_{x\in O_t}(\Phi_t(x) - \Phi_{t+1}(x)) 
        %\geq \Delta \Phi(b_1) + \Delta \Phi(c)+\Delta \Phi(w)+\Delta \Phi(z)+\Delta \Phi(t+1)
        \geq \frac{11}{2} +\Delta \Phi(b_1)+ \Delta \Phi(c).
        \end{equation}
        
    We now evaluate $\Delta \Phi(c)$.
    Letting $k = h_t(c) \geq 1$, we have that $\Phi_t(c) = \frac{k^2}{2}+\max\{0, 3-v_t(c)\}$.
    Since $v_t(c)= g_t(c)-k$, 
    we obtain 
    \[
    \Phi_t(c) = \frac{k^2}{2}+\max\{0, 3+k-g_t(c)\}\geq \frac{k^2}{2} +\max\{0, 4-g_t(c)\}.
    \]
        Since $h_{t+1}(c)=k+1\geq 2$ by Lemma~\ref{lem:refined_h}, we have $\Phi_{t+1}(c) = 2(k+1)-2=2k$.
        Hence, we obtain 
        \begin{align*}
        \Delta \Phi(c) &\geq \frac{k^2}{2}+\max\{0, 4-g_t(c)\} - 2k
        \geq - 2+\max\{0, 4-g_t(c)\}
        \end{align*}
        for $k\geq 1$.
        
        Since~\eqref{eq:delta_b1} also holds for this case, it holds by~\eqref{eq:atleast5} that 
\begin{align*}
\sum_{x\in O_t}(\Phi_t(x) - \Phi_{t+1}(x))
&\geq \frac{11}{2}+\Delta \Phi(b_1) - 2+\max\{0, 4-g_t(c)\}\\
&\geq 
\frac{7}{2}+\Delta \Phi(b_1) + \max\{0, 4-g_t(c)\}
\geq \frac{11}{2}\geq 5,
\end{align*}        
implying the lemma.
\end{proof}

\begin{lemma}\label{lem:black_rotation_even}
    For any time $t<n$ such that $s_t$ is even, it holds that 
    \[
        \sum_{x\in O_t}(\Phi_t(x) - \Phi_{t+1}(x)) \geq
        \begin{cases}
        2& (\text{if\ } s_t = 2)\\
        4 & (\text{if\ } s_t \geq 4)
        \end{cases}.
    \]
\end{lemma}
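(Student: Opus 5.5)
The plan mirrors the proof of Lemma~\ref{lem:black_rotation_odd}. We first fix the shape of the splay when $s_t$ is even. Write $y_1=t+1,\dots,y_{s_t}$ for the splaying spine, with $y_{s_t}=b_1$. The zig--zig operations are performed on the pairs $(y_2,y_3),(y_4,y_5),\dots,(y_{s_t-2},y_{s_t-1})$, followed by a zig--zag on $(y_1,b_1,t)$. Hence there is no top pair. If $s_t\ge 4$ there is exactly one bottom pair $(z,w)=(y_2,y_3)$, and the middle pairs are the remaining $(s_t-4)/2$ pairs. By Lemma~\ref{lem:black_rotation}'s bookkeeping, $r_{t+1}-2|M_t|$ equals $2$ if $s_t=2$ and $4$ if $s_t\ge 4$, which matches the statement.

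As in the odd case, $\Delta\Phi(x)\ge 0$ for every $x\in O_t$ outside $\{b_1,t+1,z,w\}$. The old root $t$ keeps colour $\cG$, so its potential stays $0$. Vertices that newly join the splaying spine go from $5.5$ to a red potential of at most $\tfrac12+5$, since their $h$ is at most $1$. It therefore suffices to bound $\Delta\Phi(b_1)+\Delta\Phi(t+1)$, plus $\Delta\Phi(z)+\Delta\Phi(w)$ when $s_t\ge 4$.

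\textbf{Case $s_t=2$.} Let $k=g_t(t+1)$. Then $h_t(b_1)=k$ and $\Delta\Phi(t+1)=\alpha(k)=\max\{0,3-k\}$, because $t+1$ becomes the root. After the zig--zag, the left child of $b_1$ is the old right child of $t+1$. Arguing as in Lemma~\ref{lem:refined_h}, this gives $h_{t+1}(b_1)=k-1$ if $k\ge 2$ and $h_{t+1}(b_1)\le 1$ if $k=1$. A check of the three cases $k=1$, $k=2$ and $k\ge 3$ against the black potential then gives $\Delta\Phi(t+1)+\Delta\Phi(b_1)\ge 2$ in each case.

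\textbf{Case $s_t\ge 4$.} Corollary~\ref{cor:bottom_poT_rem} together with $\Delta\Phi(t+1)=\max\{0,3-h_t(z)\}$ gives $\Delta\Phi(w)+\Delta\Phi(z)+\Delta\Phi(t+1)\ge 11/2$, exactly as in the odd case. In the second alternative of the corollary the sum is even larger, namely $2h_t(z)+2+\max\{0,3-h_t(z)\}\ge 6$. The new left child of $b_1$ is $y_{s_t-2}$, and Lemma~\ref{lem:delta_not_t} gives $g_{t+1}(y_{s_t-2})=g_t(y_{s_t-1})+1$. Hence $h_{t+1}(b_1)=h_t(b_1)+1$, so $\Delta\Phi(b_1)\ge -2$, and $\Delta\Phi(b_1)\ge -1$ when $h_t(b_1)\le 1$.

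These estimates already give at least $4$ in every subcase except one. The exception is the first alternative of Corollary~\ref{cor:bottom_poT_rem}, where $g_t(t+1)=1$ and $t+1$ has a right child, combined with $h_t(b_1)\ge 2$. There the naive total is only $7/2$, and this is the main obstacle.

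To close the missing $1/2$, I would not use Lemma~\ref{lem:red_q} as a black box for the bottom pair in this subcase. Instead I would recompute $\Delta Q$ directly, using $h_t(z)=1$ and $v_t(z)=g_t(z)-1$. By Lemma~\ref{lem:nonneg_v}, $g_t(z)\le g_t(y_{s_t-1})=h_t(b_1)$.
\begin{itemize}
\item If $g_t(z)$ is small, the $\alpha$-terms of $z$ and $w$ contribute strictly more than $3-v_t(z)$: both drop by at least $1$, or one of them drops from $5$.
\item If $g_t(z)$ is large, then the quadratic term $g_t(z)-\tfrac12$ in the proof of Corollary~\ref{cor:bottom_poT_rem} is itself large.
\end{itemize}
If this still falls short, I would also draw on $y_{s_t-1}$. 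It is either $w$ (when $s_t=4$) or the upper vertex of the topmost middle pair, and its $v$-value relative to $b_1$ should provide the remaining slack. The aim is to show $\Delta\Phi(w)+\Delta\Phi(z)+\Delta\Phi(t+1)\ge 6$ whenever $h_t(b_1)\ge 2$ in this subcase, which completes the bound of $4$.
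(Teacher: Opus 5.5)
Your handling of $s_t=2$ is correct and matches the paper. So is every subcase of $s_t\ge 4$ except the one you flag. For that subcase ($g_t(t+1)=1$, $t+1$ has a right child, $h_t(b_1)\ge 2$), however, your plan fails, because the inequality $\Delta\Phi(w)+\Delta\Phi(z)+\Delta\Phi(t+1)\ge 6$ that you aim for is false.

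Take $s_t=4$, $g_t(z)=2$, $g_t(w)\ge 5$, with $h_t(z)=g_t(t+1)=1$. Then $h_{t+1}(z)=h_{t+1}(w)=1$, $v_t(z)=1$, $v_t(w)\ge 3$, $v_{t+1}(z)=g_t(w)\ge 5$ and $v_{t+1}(w)\ge 4$. Hence $\Delta\Phi(z)=5/2-1/2=2$, $\Delta\Phi(w)=2-1/2=3/2$, $\Delta\Phi(t+1)=2$ and $\Delta\Phi(b_1)=-2$, a total of exactly $7/2$. None of your remedies closes this:
\begin{itemize}
\item Your first bullet fails, because the $\alpha$-term of $w$ is already $0$.
\item $g_t(z)=2$ is not ``large''.
\item With $s_t=4$, the vertex $y_{s_t-1}$ is $w$ itself. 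For $s_t\ge 6$, $y_{s_t-1}$ lies in a middle pair, so it is not in $O_t$. Its drop is already spent in Lemma~\ref{lem:red_rotation}, where no slack is guaranteed.
\end{itemize}
The case really occurs. Take $T$ on $\{1,\dots,15\}$ with root $15$ and left spine $15,14,13,1$. Let $1$ have right child $12$, whose left child is $2$. Let $2$ have right child $11$ with left spine $11,10,9,3$. Let $3$ have right child $8$ with left spine $8,7,4$, and let $4$ have right spine $4,5,6$. At $t=4$ the splaying spine is $5,7,9,15$, with $g_4(5)=1$, $g_4(7)=2$ and $g_4(9)=h_4(15)=5$, and the four vertices indeed give only $7/2$.

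The missing idea is the vertex $t+2$. You dismissed it with the bound that newly joined vertices end with red potential at most $1/2+5$. That bound is tight for the other new vertices, but not for $t+2$. In this subcase $t+2$ is the leftmost vertex of the right subtree of $t+1$. That subtree is uncolored at time $t$, as $g_t(t+1)=1$ and colored vertices form a connected subtree containing the root. So $\Phi_t(t+2)=11/2$.

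After the splay, $t+2$ is the new splaying vertex, with $h_{t+1}(t+2)=0$ and $g_{t+1}(t+2)=1$. Thus $\Phi_{t+1}(t+2)=\alpha(1)=2$. Since $t+2\in O_t$, it contributes $\Delta\Phi(t+2)=7/2$. The paper adds exactly this term to obtain $\Delta\Phi(w)+\Delta\Phi(z)+\Delta\Phi(t+1)+\Delta\Phi(t+2)\ge 6$, and hence the bound $4$ after subtracting $2$ for $b_1$. No finer analysis of $\Delta Q$ is needed.
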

\begin{proof}
We remark that, after splaying the vertex $t+1$, the top spine remains unchanged.

            \subparagraph*{When $s_t = 2$~(Figure~\ref{fig:c=t+1}).}
There is neither a bottom pair nor a top pair.
Hence, for any $x\in O_t$, we have $\Delta \Phi(x)\geq 0$.

        \begin{figure}[ht]
            \centering
            \includegraphics[scale=0.3]{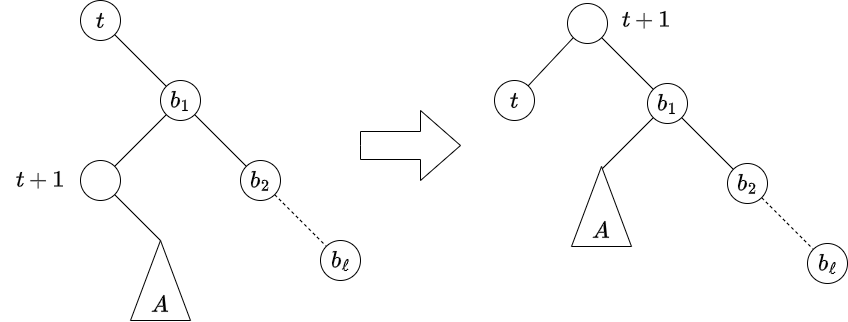}
            \caption{When $s_t = 2$.}
            \label{fig:c=t+1}
        \end{figure}
Let $k=g_t(t+1)\geq 1$.
It holds that $h_t(t+1)=0$, and hence $\Delta \Phi (t+1)= \Phi_t (t+1) = \max\{0, 3-k\}$.
Moreover, we have
\begin{equation}\label{eq:deltab1_even}
\Delta \Phi (b_1)=
\begin{cases}
0 & (k = 1)\\
1 & (k = 2)\\
2 & (k\geq 3 )
\end{cases}
\end{equation}
Indeed, if $k\geq 2$, we have $h_t(b_1) = k$ and $h_{t+1}(b_1)=k-1$, and otherwise, i.e., if $k=1$, then we have $h_{t+1}(b_1)=1$ as the right child of $t+1$ in $T_t$ becomes colored.
This implies~\eqref{eq:deltab1_even}.

Therefore, it holds that
\[
\sum_{x\in O_t}(\Phi_t(x) - \Phi_{t+1}(x)) \geq 
\Delta \Phi (t+1) + \Delta \Phi (b_1)
\geq 
\max\{0, 3-k\} + \Delta \Phi (b_1)
\geq 
2.
\]

\subparagraph*{When $s_t \geq 4$~(Figure~\ref{fig:c_neq_t+1}).}
There exists a bottom pair, denoted by $(z, w)$.
For any $x \in O_t$ with $x\not\in\{b_1, z, w\}$, we have $\Delta\Phi(x) \geq 0$.
Hence, it holds that 
\[
\sum_{x\in O_t}(\Phi_t(x) - \Phi_{t+1}(x)) \geq 
\Delta \Phi (b_1)
+\Delta \Phi (w)
+\Delta \Phi (z)
+\Delta \Phi (t+1)+\Delta \Phi (t+2).
\]

                \begin{figure}[ht]
            \centering
            \includegraphics[scale=0.3]{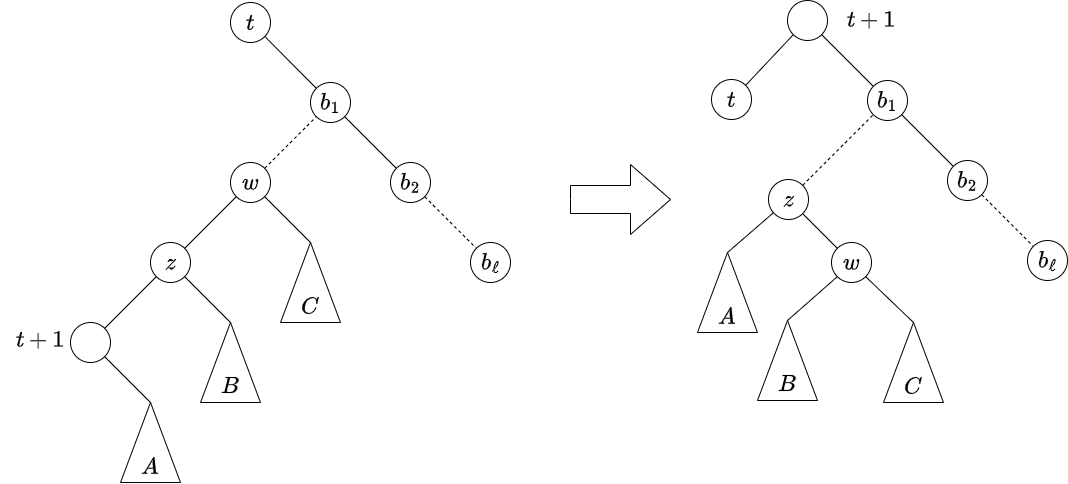}
            \caption{When $s_t \geq 4$.}
            \label{fig:c_neq_t+1}
        \end{figure}

        In what follows, we will show that
\begin{equation}\label{eq:evenfour}     \Delta\Phi(w)+\Delta\Phi(z) + \Delta\Phi(t+1) + \Delta\Phi(t+2)\geq  6.
        \end{equation}
Since $g_t(t+1) \geq 1$ and $h_t(t+1) = 0$, we have $\Delta\Phi(t+1) = \max\{0, 3-g_t(t+1)\} = \max\{0, 3-h_t(z)\}$.

        Suppose that $g_t(t+1)=1$ and the vertex $t+1$ has a right child $y'$.
        Then the vertex $t+2$ is the leftmost leaf of the right sub-tree of $t+1$, which is uncolored at time $t$, as $g_t(t+1)=1$.
        Since the vertex $t+2$ moves to the splaying spine after splaying $t+1$, we have $g_{t+1}(t+2) = 1$, and hence $\Delta \Phi (t+2)=5.5 - 2 = 7/2$.
        By Corollary~\ref{cor:bottom_poT_rem},       
        \[            \Delta\Phi(w)+\Delta\Phi(z) + \Delta\Phi(t+1) + \Delta\Phi(t+2)\geq \left(\frac{5}{2} + h_t(z)\right)+\max\{0, 3-h_t(z)\} +\frac{7}{2}\geq 6.
        \]
        Next consider the other case, that is, when either $g_t(t+1)\geq 2$ or $g_t(t+1)=1$ but $t+1$ has no right child.
        Then, by Corollary~\ref{cor:bottom_poT_rem}, with $\Delta \Phi (t+2)\geq 0$,   
        \[            \Delta\Phi(w)+\Delta\Phi(z) + \Delta\Phi(t+1) + \Delta\Phi(t+2)\geq \left(2 h_t(z)+2\right)+\max\{0, 3-h_t(z)\}   \geq 6,
        \]
        since $h_t(z)\geq 1$.
        Thus~\eqref{eq:evenfour} holds.

Since $h_{t+1}(b_1) = h_t(b_1)+1$, it holds that $\Delta\Phi(b_1) \geq -2$.
Therefore, we have 
\begin{align*}
\sum_{x\in O_t}(\Phi_t(x) - \Phi_{t+1}(x))
&\geq 
\Delta\Phi(b_1)+\Delta\Phi(w) + \Delta\Phi(z) + \Delta\Phi(t+1) + \Delta\Phi(t+2)\\
&\geq -2+6\geq 4.
\end{align*}
%\end{description}
This completes the proof.
\end{proof}

Lemma~\ref{lem:black_rotation} follows from Lemmas~\ref{lem:black_rotation_odd} and~\ref{lem:black_rotation_even}.
%See the full version~\cite{} for the details.
%See Appendix~\ref{sec:appendix} for the details.

\begin{proof}[Proof of Lemma~\ref{lem:black_rotation}]
Recall that the number $r_{t+1}$ of rotations during $\splay{t+1}$ is equal to $s_t$.
Hence it holds that 
\[
r_{t+1} = 
\begin{cases}
2|P_t|+1 & \text{(if $s_t$ is odd)}\\
2|P_t|+2 & \text{(if $s_t$ is even)}
\end{cases},
\]
where we recall that $P_t$ is the set of pairs at time $t$.
By definition, 
we obtain
\[
|P_t|
=
\begin{cases}
0 & \text{(if $s_t\leq 2$)}\\
1 & \text{(if $s_t= 3$)}\\
|M_t|+2 & \text{(if $s_t$ is odd and $s_t\geq 5$)}\\
|M_t|+1 & \text{(if $s_t$ is even and $s_t\geq 4$)}
\end{cases}.
\]
Therefore, if $s_t$ is odd, then
\[
r_{t+1} 
= 
\begin{cases}
1 & \text{(if $s_t=1$)}\\
3 & \text{(if $s_t= 3$)}\\
2 |M_t|+5 & \text{(if $s_t\geq 5$)}
\end{cases},
\]
and, if $s_t$ is even, then
\[
r_{t+1} 
= 
\begin{cases}
2 & \text{(if $s_t=2$)}\\
2 |M_t|+4 & \text{(if $s_t\geq 4$)}
\end{cases}.
\]
Since $|M_t|=0$ if $s_t\leq 3$, 
we conclude from Lemmas~\ref{lem:black_rotation_odd} and~\ref{lem:black_rotation_even} that
\[
r_{t+1} - 2|M_t| \leq  \sum_{x\in O_t}(\Phi_t(x) - \Phi_{t+1}(x))
\]
which completes the proof of Lemma~\ref{lem:black_rotation}.
%\qed
\end{proof}

\section{Lower Bound}\label{sec:LB}

In this section, we show that the number of rotations is at least $(4-o(1))n$ for a left skewed binary tree $T$ of $n$ vertices.

\begin{theorem}
    There exists a splay tree $T$ with $n$ vertices such that 
    $\rot(T)\geq (4-o(1))n$.
\end{theorem}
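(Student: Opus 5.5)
We take $T$ to be the left skewed tree on $N=\{1,\dots,n\}$: the root is $n$, each vertex $i\geq 2$ has left child $i-1$, and $1$ is the leaf. Since Theorem~\ref{thm:main} already gives the upper bound, we only need to count rotations exactly, or bound them from below, for this one input. We will use the fact from Section~\ref{sec:pre} that the number of rotations of $\splay{t+1}$ equals $s_t$, the length of the splaying spine of $T_t$. So $\rot(T)$ is, up to $O(1)$, the sum $\sum_t s_t$, and the proof reduces to understanding how the splaying spine evolves for this particular tree.

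The first step is the initial splay. In $T_0$ the splaying spine is the whole path $n,n-1,\dots,1$, so $\splay{1}$ costs about $n$ rotations. Its pairs are $(2,3),(4,5),\dots$. Computing the zig--zig steps explicitly, as in the small example $4\!-\!3\!-\!2\!-\!1$, we find that $T_1$ has root $1$, and the right sub-tree of $1$ is a comb. Its left spine consists of the even-indexed vertices, about $n/2$ of them, and each spine vertex $2j$ carries a single right child $2j+1$. The next splaying spine therefore has length about $n/2$.

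Next we look for a recursively describable family of shapes that is closed under one sequential splay. A natural guess is the following: the right sub-tree of the root is a left spine $u_1,u_2,\dots$, and the right sub-tree hanging off each $u_i$ is a right path, or more generally a tree of a shape we fully understand. Using the case analysis of Lemma~\ref{lem:refined_h}, a zig--zig on a pair $(z,w)$ moves the right sub-tree of $z$ under $w$ as its left sub-tree and puts $w$ on the right spine of $z$. This yields an explicit update rule for the multiset of hanging sub-trees, and with it $s_{t+1}$ as a function of $s_t$ and the hanging sizes. We expect the lengths to behave like a halving process, $n, n/2, \dots$, interleaved with the long stretch of cheap splays that walk down the right chains. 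The target is a recurrence of the form $C(m)=C(m/2)+(\text{linear term})$, where $C(m)$ is the total cost of processing a comb of spine length $m$. The goal is to show that the linear terms sum to $(4-o(1))n$, for instance $n + n/2\cdot(\text{const}) + \cdots$ forming a geometric series whose sum approaches $4n$. The lower-order contributions from odd lengths, the final zig or zig--zag, and boundary effects will be absorbed into $o(n)$, for example by restricting to $n=2^k$ first and then extending by monotonicity or by padding.

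The main obstacle is finding the right invariant shape. After the second splay the hanging sub-trees are no longer single vertices; they become small combs themselves, so it is not obvious that one family is closed under the operation. As a first attempt, we would simulate by hand, or note the pattern for $n=2^k$ up to $k=4$ or $5$, to guess the family and the exact cost sequence. We would then prove closure by induction using Lemma~\ref{lem:delta_not_t}-type bookkeeping on $g_t,h_t$. If exact closure fails, we would fall back on only lower-bounding $s_t$: we would identify enough vertices that provably lie on the splaying spine at each time, so that each vertex contributes about four rotations on average.
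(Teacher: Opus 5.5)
Your proposal does not yet contain a proof. The main plan (find a family of shapes closed under one sequential splay, then solve a recurrence $C(m)=C(m/2)+\text{linear}$) and the fallback (identify enough vertices on the splaying spine) are both left unexecuted. You also note yourself that closure is unclear once the hanging sub-trees become nested combs.

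More importantly, the plan has no identified source for half of the bound. The halving you expect is easy to make rigorous: the lower vertex $z$ of every pair $(z,w)\in P_t$ lands on the new splaying spine, so $s_{t+1}\ge\lfloor s_t/2\rfloor$. But this only gives $\sum_{t\le\lceil\log n\rceil}s_t\ge(2-o(1))n$. The other $2n$ must come from the roughly $n$ later splays costing about $2$ each on average. Your remark about ``cheap splays that walk down the right chains'' points the wrong way: cost-$1$ splays are exactly what threaten the bound, and nothing in your proposal controls how many there are.

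The missing idea, which the paper uses, is a top-spine accounting that needs no knowledge of the tree's shape. Recall that $\splay{t+1}$ costs $s_t$ rotations, and consider the non-root vertices of the top spine:
\begin{itemize}
\item If $s_t=1$, then $t+1$ is already on the top spine. The splay costs $1$ and removes one non-root vertex from the top spine.
\item If $s_t$ is odd and at least $3$, the splay costs at least $3$. It adds exactly one new non-root vertex to the top spine, namely the lower vertex of the top pair.
\item If $s_t$ is even, the splay costs at least $2$ and leaves the non-root part of the top spine unchanged.
\end{itemize}
Now look only at the splays after time $\lceil\log n\rceil$. The number of cost-$1$ splays exceeds the number of cost-$\ge 3$ splays by at most the size of the top spine at that time, which is $O(\log n)$. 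So these $n-O(\log n)$ splays cost at least $2(n-O(\log n))-O(\log n)=(2-o(1))n$ rotations. Adding the $(2-o(1))n$ from the first $\lceil\log n\rceil$ splays gives $(4-o(1))n$, with no invariant shape or exact recurrence needed.
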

\begin{proof}
Let $T$ be a binary tree with vertices $1,2,\dots, n$ such that $i$ is the left-child of $i+1$ for $i\geq 1$.
Similarly to Section~\ref{sec:pre}, we add the vertex $0$ to $T$ so that the vertex $n$ is the right-child of $0$.
Then the splaying spine of $T_0$ has vertices $1,2,\dots, n$ in this order from the leaf.
For each $t = 1, 2, \dots, n$, we denote by
$T_t$ the splay tree just after invoking $\splay{t}$ to $T_{t-1}$.
Let $s_t$ denote the number of vertices on the splaying spine of $T_t$.
Recall that we have $\rot (T_0)=\sum_{t=0}^{n-1} s_t$.

We first observe that, for $t<n$, the length $s_t$ of the splaying spine becomes at least $\lfloor s_t/2\rfloor$ after splaying the vertex $t+1$, that is, $s_{t+1}\geq s_{t+1}/2 -1$.
Hence, since $s_0=n$, it holds that
\[
\sum_{t=0}^{\lfloor\log n\rfloor} s_t \geq n + \frac{n}{2}+ \frac{n}{4}+\dots +\frac{n}{2^{\log n}} -\log n= 2n-\log n=(2-o(1))n.
\]

We next evaluate the number of rotations for splaying the vertex $t$ for $t=\lceil\log n\rceil,\dots, n$.
Let $k$ be the total number of vertices that have moved to the top spine during the sequential access.
Note that, once a vertex moves to the top spine, then it remains on the top spine until it is splayed.

We see that, if the vertex $t$ lies on the top spine, splaying $t$ requires $1$ rotation, and otherwise, it takes at least $2$ rotations.
Moreover, if splaying $t$ moves some vertex onto the top spine, it takes at least $3$ rotations.
During the remaining $n-\lceil\log n\rceil$ splaying, at least $k$ splaying are invoked to vertices on the top spine, and at least $k-\lceil\log n\rceil$ splaying moves a vertex onto the top spine.
Therefore, 
the number of rotations after time $\lceil\log n\rceil$ is at least
\[
1\cdot k + 2\left(n-\lceil\log n\rceil - k -(k-\lceil\log n\rceil)\right) + 3\cdot (k-\lceil\log n\rceil) = (2-o(1))n.
\]
Therefore, the total number of rotations is at least $(4-o(1))n$.
\end{proof}

%\clearpage
\bibliography{myrefs}

\end{document}